\pgfplotsset{compat=newest}
\newif\ifhighlightShortLongDiffs
\newif\ifversionShort
\newcommand{\versionShortLong}[2]{\ifversionShort \ifhighlightShortLongDiffs \textcolor{green}{Short version:} \fi #1 \else \ifhighlightShortLongDiffs \textcolor{green}{Long version:}\fi #2\fi}
\newif\ifcomment
\newcommand{\new}[1]{\ifcomment \textcolor{blue}{#1}\fi #1}
\DeclareMathOperator{\wt}{wt}
\DeclareMathOperator{\dist}{dist}
\newcommand{\F}{\ensuremath{\mathbb{F}}}
\newcommand{\Fq}{\ensuremath{\mathbb{F}_q}}
\newcommand{\ceil}[1]{\ensuremath{\left\lceil{#1}\right\rceil}}
\newcommand{\floor}[1]{\ensuremath{\left\lfloor{#1}\right\rfloor}}
\renewcommand{\leq}{\leqslant}
\renewcommand{\geq}{\geqslant}
\newcommand{\Mod}{\mathrm{mod}}
\newcommand{\modstarq}{\ (\Mod^*\ q)}
\newcommand{\bbN}{\ensuremath{\mathbb{N}}}
\newcommand{\cA}{\ensuremath{\mathcal{A}}}
\newcommand{\cB}{\ensuremath{\mathcal{B}}}
\newcommand{\cC}{\ensuremath{\mathcal{C}}}
\newcommand{\cG}{\ensuremath{\mathcal{G}}}
\newcommand{\cL}{\ensuremath{\mathcal{L}}}
\renewcommand{\vec}[1]{\ensuremath{\boldsymbol{#1}}}
\newcommand{\bA}{\vec{A}}
\newcommand{\ba}{\vec{a}}
\newcommand{\bb}{\vec{b}}
\newcommand{\bi}{\vec{i}}
\newcommand{\bj}{\vec{j}}
\newcommand{\bp}{\vec{p}}
\newcommand{\bs}{\vec{s}}
\definecolor{brightmaroon}{rgb}{0.76, 0.13, 0.28}
\definecolor{ao}{rgb}{0.0, 0.5, 0.0}
\definecolor{azure}{rgb}{0.0, 0.5, 1.0}
\definecolor{TUMBlue}{RGB}{0,101,189} 
\definecolor{TUMBlueDark}{RGB}{0,82,147} 
\definecolor{TUMBlueDark2}{RGB}{0,51,89} 
\definecolor{TUMBlueLight}{RGB}{152,198,234} 
\definecolor{TUMBlueLight2}{RGB}{000,115,207} 
\definecolor{TUMBlueMiddle}{RGB}{100,160,200} 
\definecolor{TUMElfenbein}{RGB}{218,215,203} 
\definecolor{TUMGreen}{RGB}{162,173,0} 
\definecolor{TUMGreenLight}{RGB}{0,124,48}
\definecolor{TUMGreenLight2}{RGB}{103,154,29}
\definecolor{TUMOrange}{RGB}{227,114,34} 
\definecolor{TUMOrangeLight}{RGB}{239,136,34} 
\definecolor{TUMOrangerDark}{RGB}{214,076,019}
\definecolor{TUMGrayDark}{RGB}{088,088,090}
\definecolor{TUMGray}{RGB}{156,157,159}
\definecolor{TUMGrayLight}{RGB}{215,217,218}
\definecolor{TUMRed}{RGB}{196,7,27}
\definecolor{TUMRedDark}{RGB}{156,013,022}
\definecolor{TUMRedLight}{RGB}{236,37,027}
\definecolor{TUMpurple}{RGB}{105,008,090}
\definecolor{TUMviolet}{RGB}{015,027,095}
\definecolor{TUMcyan}{RGB}{000,119,138}	
\definecolor{TUMyellow}{RGB}{255,220,000}
\definecolor{TUMgold}{RGB}{249,186,000}
\definecolor{TUMgoldhigh}{RGB}{255,200,0}
\begin{document}
\setlength{\abovedisplayskip}{1pt} 
\setlength{\belowdisplayskip}{3pt}
\title{Quadratic-Curve-Lifted Reed-Solomon Codes} 
%
%
\author{Hedongliang Liu\inst{1}\orcidID{0000-0001-7512-0654} \and
Lukas Holzbaur\inst{1}\orcidID{0000-0002-8048-3051} \and
Nikita Polyanskii\inst{2}\orcidID{0000-0003-2626-8139} \and
Sven Puchinger\inst{3}\orcidID{0000-0002-7474-2678} \and
Antonia Wachter-Zeh\inst{1}\orcidID{0000-0002-5174-1947}
\thanks{The work of H.~Liu has been supported by a German Israeli Project Cooperation (DIP) grant under grant no.~PE2398/1-1 and KR3517/9-1. The work of L.~Holzbaur, N.~Polyanskii, and A.~Wachter-Zeh was supported by the German Research Foundation (Deutsche Forschungsgemeinschaft, DFG) under Grant No. WA3907/1-1.
}}
\authorrunning{H.~Liu et al.}
%
\institute{Technical University of Munich, Germany
  \and IOTA Foundation, Germany
  \and HENSOLDT Sensors GmbH, Germany
}
\maketitle              
\begin{abstract}

Lifted codes are a class of evaluation codes attracting more attention due to good locality and intermediate availability. In this work we introduce and study \emph{quadratic-curve-lifted Reed-Solomon (QC-LRS) codes}, which is a class of bivariate evaluation codes and the codeword symbols whose coordinates are on a quadratic curve form a codeword of a Reed-Solomon code. We first develop a necessary and sufficient condition on the monomials which form a basis of the code.
Based on the condition, we give upper and lower bounds on the dimension and show that the asymptotic rate of a QC-LRS code over $\mathbb{F}_q$ with local redundancy $r$ is $1-\Theta(q/r)^{-0.2284}$. Moreover, we provide analytical results on the minimum distance of this class of codes and compare QC-LRS codes with lifted Reed-Solomon codes by simulations in terms of the local recovery capability against erasures. For short lengths, QC-LRS codes have better performance in local recovery for erasures than LRS codes of the same dimension.
\keywords{Lifted Codes \and Reed-Solomon Codes \and Quadratic Curves \and Locality \and Dimension.}
\end{abstract}
\section{Introduction}
\label{sec:intro}

Lifted codes were introduced by Guo, Kopparty and Sudan~\cite{guo2013new} as evaluation codes obtained from multivariate polynomials over finite fields.
Informally, the key property of these codes is that the restriction to any affine subspace of fixed dimension
of the evaluation space is a codeword of a fixed base code.
A setting of particular interest, referred to as lifted Reed-Solomon (LRS) codes, is given by lifted codes where each $1$-dimensional affine subspace is a codeword of an RS code.
This can be viewed as a generalization of the well-known Reed-Muller codes.
A surprising advantage of LRS codes is that they achieve much larger asymptotic code rate as the field size grows compared to Reed-Muller (RM) code.
The dimension of LRS codes is analyzed via the number of \emph{good monomials}, i.e., the number of multi-variate monomials that result in a codeword of the base code when evaluated on any fixed line. The linear span of these good monomials is shown to generate the entire lifted code. The study of these codes was continued in \cite{polyanskii2019trivariate,holzbaur2020lifted}, where tight asymptotic bounds on the rate were derived.

The seminal paper \cite{guo2013new} gave rise to a number of related concepts and code constructions. The works
\cite{Li2019lifted-multiplicity,wu2015revisiting,holzbaur2021lifted}
consider lifting of \emph{multiplicity codes} \cite{kopparty2014high}, another class of codes with good locality properties. \emph{Degree-lifted} codes were introduced in \cite{eli2013degree-lifted}, where each codeword polynomial is constructed as the product of the uni-variate polynomials in the base code with an additional constraint on the total degree.
A class of lifted codes based on code automorphisms was introduced in \cite{guo2016high-rate}.
Codes constructed from all bivariate polynomials, evaluated on the Hermitian curve, such that the restriction to any line agrees with some low-degree univariate polynomial on the points of the Hermitian curve intersected with that line were analyzed in \cite{lopez2021hermitian} and named \emph{Hermitian-lifted codes}.
A variant of lifted codes that utilizes the trace operation to obtain binary codes with good locality properties was introduced in \cite{hastings2020wedge} and coined \emph{wedge-lifted codes.}
\new{Thanks to the comments from an anonymous reviewer, we noticed that the recent work~\cite{lavauzelle2020weighted} gave a more general definition of lifted codes with curves, which is called \emph{weighted lifted codes}. The QC-LRS code (\cref{def:curve-liftedRScodes}) studied in this work is coincidentally identical to \cite[Def.~IV.1]{lavauzelle2020weighted} with $\eta=2$.}

\subsection{Main Contribution and Organization}
\label{sec:motivation}
All works mentioned above consider the restriction to linear subspaces or code automorphisms.
Our work provides a class of evaluation codes whose local recovery sets correspond to \new{a} set of quadratic curves. \new{The advantage of this construction is that there is a much larger number of recovery sets for each codeword symbol, however, these recovery sets do no longer (necessarily) intersect in only one position.}
This work is devoted to the analysis of the rate and distance of these codes, as well as their local recovery capability compared to LRS/RM codes.

We first investigate the dimension of QC-LRS codes. Since QC-LRS codes are evaluation codes, we analyze the dimension by first deriving the necessary and sufficient condition on the good monomials, where we take similar approaches as in~\cite{hastings2020wedge}, and then by showing that these good monomials form a basis of the code as in \cite{Li2019lifted-multiplicity}.
By quantifying the bad monomials following the approach for LRS codes from~\cite{holzbaur2020lifted}, we derive upper and lower bounds on the dimension of QC-LRS codes over $\Fq$ with $q$ being a power of two. The asymptotic rate of QC-LRS codes over $\Fq$ with local redundancy $r$ is shown to be $1-\Theta((q/r)^{-0.2284})$.
\new{The approach in this paper gives a more precise estimation of the dimension than that in \cite{lavauzelle2020weighted}, which studied a more general definition of lifted codes with curves of arbitrary degree.}
To study the advantage of more local groups given by the new notion of QC-LRS codes than LRS codes, we compare between LRS codes and QC-LRS codes the failure probability of locally
recovering codeword symbols from erasures. The simulation results show that for the blocklength $64$  and under the same code dimension, QC-LRS codes have similar or better performance than LRS codes.

The organization of this paper is as follows: \cref{sec:preliminaries} introduces the notations used throughout the paper and some basics,
which are required in the proofs of the main results.
In \cref{sec:CLC} we formally define the QC-LRS codes and present results on the dimension and distance.
\versionShortLong{Due to the page limit, we omit some proofs and we refer to the full version of this paper~\cite{liu2021quadratic} for an extensive justification.}{}
Section 4 presents the comparison on the failure probability of local
recovery from erasures by QC-LRS and LRS codes.

\section{Preliminaries}
\label{sec:preliminaries}
\label{sec:notations}
Denote the set of integers $\{a,\dots,b\}$ by $[a,b]$ and by $[b]$ if $a=1$. A finite field of size $q$ is denoted by $\Fq$. The integer ring of size $q$ is denoted by $\mathbb{Z}_q$.
Let $\deg:\Fq[x]\to \new{\bbN}$ be the degree function of univariate polynomials. For any $f=\sum_{i=0}^{q-1}f_ix^i$, $\deg(f)=\max\{i|f_i\neq 0\}$.
For non-negative integers $a,b\in\bbN$ with \emph{binary representations} $a=(a_0,\dots, a_{\ell-1})_2$, $b=(b_0,\dots, b_{\ell-1})_2$, we say that \emph{$a$ lies in the 2-shadow of $b$}, denoted by $a\leq_2 b$, if $a_i\leq b_i,\ \forall i \in[0,\ell-1]$. The bit $a_{\ell-1}$ is the most significant bit in the binary representation of $a$.
For a bi-variate function $f:\Fq^2\to\Fq$ and a set $D\subset\Fq^2$, let $f|_{D}$ denote the restriction of $f$ to the domain $D$. If $D$ is the set of points corresponding to the roots in $\Fq^2$ of a bi-variate function $\phi:\Fq^2\to\Fq$, i.e., $D=\{(x,y)\in\Fq^2\ |\ \phi(x,y)=0\}$, we denote by $f|_{\phi}$ the restriction of $f$ to the \emph{curve} $\phi$.
A bivariate function $\phi:\Fq^2\to\Fq$ is a \emph{quadratic function} or \emph{quadratic curve} if it is in the form
  $\phi(x,y)=y+\alpha x^2+\beta x+\gamma \ ,$
where $\alpha,\beta,\gamma \in \Fq$.

Define an operation $(\Mod^*\ q)$ that takes a non-negative integer and maps it to an element in $[0, q-1]$ as follows
\begin{align*}
  a\ (\Mod^*\ q) :=
  \begin{cases}
    a, & \text{if } a\leq q-1\\
    q-1, & \text{if } a\ (\Mod\ q-1) =0, a\neq 0\\
    a \ (\Mod\ q-1), & \text{else}
  \end{cases}
\end{align*}
It can be readily seen that if $a\modstarq=b$, then $x^a=x^b\ (\Mod\ x^q-x)$ in $\Fq[x]$.


\begin{lemma}[Lucas' Theorem~\cite{LucasTheorem}]
  \label{lem:lucasThm}
  Let $p$ be a prime and $a,b\in\bbN$ be written in $p$-ary representations $a=(a_0,\dots, a_{\ell-1})_p$, $b=(b_0,\dots, b_{\ell-1})_p$. Then
  \begin{align*}
    \binom{a}{b}=\prod_{i=1}^\ell\binom{a_i}{b_i}\mod p\ .
  \end{align*}
  If $p=2$, then $\binom{a}{b}=1$ if and only if $b\leq_2 a$.
\end{lemma}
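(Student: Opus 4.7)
The plan is to prove Lucas' Theorem by working with the polynomial $(1+x)^a$ in $\mathbb{F}_p[x]$ and exploiting the ``Freshman's dream'' identity $(1+x)^{p} \equiv 1 + x^{p} \pmod{p}$. That identity itself follows because $p \mid \binom{p}{k}$ for $1 \leq k \leq p-1$, and iterating it gives $(1+x)^{p^i} \equiv 1 + x^{p^i} \pmod{p}$ for every $i \geq 0$, which will be the key tool for the whole argument.

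First I would write $a$ in its base-$p$ expansion $a = \sum_{i=0}^{\ell-1} a_i p^i$ so that
\[ (1+x)^a = \prod_{i=0}^{\ell-1} \bigl((1+x)^{p^i}\bigr)^{a_i} \equiv \prod_{i=0}^{\ell-1} (1 + x^{p^i})^{a_i} \pmod{p} . \]
Next I would expand each factor by the binomial theorem and distribute, yielding a sum over tuples $(c_0, \dots, c_{\ell-1})$ with $0 \leq c_i \leq a_i \leq p-1$ of $\prod_i \binom{a_i}{c_i} x^{\sum_i c_i p^i}$. The coefficient of $x^b$ on the left-hand side is $\binom{a}{b}$ by the binomial theorem, while on the right-hand side, by uniqueness of the base-$p$ representation of $b = \sum_i b_i p^i$, it is exactly $\prod_i \binom{a_i}{b_i}$. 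This yields the claimed congruence modulo $p$.

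For the case $p=2$, each $a_i, b_i \in \{0,1\}$, so every factor $\binom{a_i}{b_i}$ equals $0$ (when $b_i=1$ and $a_i=0$) or $1$ (otherwise), and the product evaluates to $1$ precisely when $b_i \leq a_i$ for all $i$, i.e. when $b \leq_2 a$. The only subtlety, arising if $b > a$, is that at least one $b_i$ must then exceed the corresponding $a_i$ (since otherwise digitwise comparison would force $b \leq a$), so the product vanishes in agreement with $\binom{a}{b} = 0$. Since the only ingredients are the elementary Freshman's dream and uniqueness of base-$p$ expansions, I do not anticipate a substantial obstacle, and expect the argument to be short and self-contained.
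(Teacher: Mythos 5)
Your proof is correct and complete: it is the standard generating-function argument for Lucas' theorem, expanding $(1+x)^a \equiv \prod_{i}(1+x^{p^i})^{a_i} \pmod{p}$ via the Frobenius identity and comparing the coefficient of $x^b$ using uniqueness of the base-$p$ expansion, with the $p=2$ case read off digitwise. The paper itself gives no proof of this lemma — it is stated as a classical result with a citation — so there is nothing to diverge from; your self-contained derivation is exactly the textbook proof one would cite, and your handling of the degenerate case (some $b_i > a_i$, where no tuple contributes and the coefficient is $0$, matching $\binom{a_i}{b_i}=0$) closes the only point that is sometimes glossed over. One cosmetic remark: your indexing $i=0,\dots,\ell-1$ is the consistent one; the paper's displayed product $\prod_{i=1}^{\ell}\binom{a_i}{b_i}$ has a harmless off-by-one in its indices relative to its own digit notation $(a_0,\dots,a_{\ell-1})_p$.
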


\begin{lemma}[Combinatorial Nullstellensatz~{\cite[Theorem 1.2]{alon1999combinatorial}}]
  \label{lem:nullstellansatz}
  Let $\F$ be an arbitrary field, and let $f(x_1, \dots,x_m)$ be a multivariate polynomial in $\F[x_1,\dots,x_m]$ \new{of degree} $\deg(f)=\sum_{i=1}^m t_i$, where each $t_i$ is a non-negative integer, and suppose the coefficient of $\prod_{i=1}^m x_i^{t_i}$ in $f$ is nonzero. Then, if $S_1,\dots, S_m$ are subsets of $\F$ with $|S_i|> t_i$, there are $s_1\in S_1,\dots, s_m\in S_m$ so that
  \begin{align*}
    f(s_1,\dots,s_m)\neq 0.
  \end{align*}
\end{lemma}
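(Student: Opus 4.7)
The plan is to argue by contradiction, combining a polynomial-reduction step with an auxiliary lemma about polynomials that vanish on a combinatorial grid. Suppose the conclusion fails, i.e.\ $f(s_1,\ldots,s_m)=0$ for every $(s_1,\ldots,s_m)\in S_1\times\cdots\times S_m$. For each $i$, set $g_i(x_i)=\prod_{s\in S_i}(x_i-s)$, so that $g_i$ is a monic polynomial of degree $|S_i|$ vanishing identically on $S_i$. Writing $g_i(x_i)=x_i^{|S_i|}-h_i(x_i)$ with $\deg h_i\leq |S_i|-1$, we get the identity $x_i^{|S_i|}=h_i(x_i)$ for every $x_i\in S_i$.

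Next I would reduce $f$ modulo the polynomials $g_i$: iteratively replace any occurrence of $x_i^{|S_i|}$ in $f$ (and in the intermediate polynomials produced along the way) by $h_i(x_i)$, until one obtains a polynomial $\tilde f$ satisfying $\deg_{x_i}\tilde f\leq |S_i|-1$ for every $i$. Since each such substitution leaves the value unchanged on $S_1\times\cdots\times S_m$, the polynomial $\tilde f$ also vanishes on the entire grid. The key observation is that every substitution \emph{strictly lowers} the total degree of the monomial it is applied to; since the distinguished monomial $\prod_i x_i^{t_i}$ has total degree $\deg f$, no substitution can feed new contributions into its coefficient. Moreover, because $t_i\leq |S_i|-1$, the distinguished monomial itself is never touched by any reduction. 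Hence the coefficient of $\prod_i x_i^{t_i}$ in $\tilde f$ equals its coefficient in $f$, which is nonzero, so $\tilde f\not\equiv 0$.

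The contradiction is then produced via the following auxiliary claim: if $\tilde f\in\F[x_1,\ldots,x_m]$ satisfies $\deg_{x_i}\tilde f\leq |S_i|-1$ for every $i$ and vanishes on all of $S_1\times\cdots\times S_m$, then $\tilde f\equiv 0$. This is proved by induction on $m$. The base case $m=1$ is the standard fact that a nonzero univariate polynomial of degree at most $|S_1|-1$ has at most $|S_1|-1$ roots in $\F$. For the inductive step, expand
$$\tilde f=\sum_{j=0}^{|S_m|-1} f_j(x_1,\ldots,x_{m-1})\,x_m^j,$$
fix an arbitrary $(a_1,\ldots,a_{m-1})\in S_1\times\cdots\times S_{m-1}$, and apply the base case to the univariate polynomial $\tilde f(a_1,\ldots,a_{m-1},x_m)$, which has degree at most $|S_m|-1$ yet vanishes on all of $S_m$; therefore every $f_j(a_1,\ldots,a_{m-1})$ is zero. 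The inductive hypothesis applied to each $f_j$ then gives $f_j\equiv 0$, hence $\tilde f\equiv 0$, contradicting the nonvanishing of the coefficient of $\prod_i x_i^{t_i}$.

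The main subtle point is the bookkeeping in the reduction step: one must verify both that every substitution strictly decreases total degree (so the coefficient of the top monomial $\prod_i x_i^{t_i}$ is preserved verbatim) and that the process terminates with individual degree at most $|S_i|-1$ in each variable while producing the same values on the grid. Once these invariants are in hand, the auxiliary grid-vanishing lemma supplies the contradiction cleanly.
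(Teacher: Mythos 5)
The paper does not prove this lemma at all: it is quoted verbatim from Alon's paper (Theorem 1.2 of the cited reference), so there is no in-paper proof to compare against. Your argument is correct and is essentially the classical proof of the Combinatorial Nullstellensatz: reduce $f$ modulo the polynomials $g_i(x_i)=\prod_{s\in S_i}(x_i-s)$, note that each substitution preserves values on the grid and strictly lowers total degree so that the coefficient of $\prod_i x_i^{t_i}$ (a monomial of maximal total degree, never itself reduced since $t_i\leq|S_i|-1$) survives unchanged, and then invoke the grid-vanishing lemma for polynomials of individual degree at most $|S_i|-1$, proved by induction on $m$. This is the same route as Alon's original derivation (his Theorem 1.1 plus Lemma 2.1), just streamlined to avoid stating the ideal-membership theorem explicitly; the bookkeeping you flag (preservation of the top coefficient, termination of the reduction, and the value-preservation on $S_1\times\cdots\times S_m$) is handled correctly.
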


\section{Quadratic-Curve-Lifted Reed-Solomon Codes}
\label{sec:CLC}
In this section, we first give a general definition of curve-lifted Reed-Solomon codes and present our results on a specific class of codes, the QC-LRS codes, with restriction to quadratic curves.
\begin{definition}[Curve-Lifted Reed-Solomon Codes]
  \label{def:curve-liftedRScodes}
   Let $q$ be a power of $2$ and $\Phi$ be a set of bi-variate functions. A curve-lifted Reed-Solomon code is defined by
  \begin{align*}
    \cC_{q}(\Phi,d):=\{f:\Fq^2\to\Fq\ |\ \deg (f|_{\phi})<d,\forall \phi\in\Phi \}\ .
  \end{align*}
\end{definition}
In order to investigate the dimension of curve-lifted RS codes, we introduce the \emph{good monomials}\footnote{\new{This is a short notation to easily address these monomials later. There is no bias on the performance of the monomials.}}, which is a tool also used in studying LRS codes in~\cite{guo2013new,holzbaur2020lifted}.
\begin{definition}[{$(\Phi,d)^*$}-good monomial]
  \label{def:goodMonomials}
  Given a set $\Phi$ of bi-variate functions, a monomial $m(x,y)=x^ay^b$ is \emph{$(\Phi,d)^*$-good} if $\deg (m|_{\phi})<d, \forall \phi \in\Phi$. The monomial is \emph{$(\Phi,d)^*$-bad} otherwise.
\end{definition}

In the following let $\Phi$ be the set of all quadratic functions\footnote{\new{This set is a subset of quadratic curves, which are often referred as \emph{conics} in geometry. This set is also identical to the set of \emph{affine $\eta$-lines} defined in \cite{lavauzelle2020weighted} with $\eta=2$.}} over $\Fq$, i.e.,
\begin{align}\label{eq:Phi}
  \Phi:=\{\phi(x,y)=y+\alpha x^2+\beta x+\gamma, \forall \alpha,\beta,\gamma \in \Fq\}\
  \end{align}
and we present the results on QC-LRS codes.

The following~\cref{lem:goodCond} gives a necessary and sufficient condition such that a monomial $m(x,y)=x^ay^b$ is $(\Phi,d)^*$-good.
\begin{lemma}
  \label{lem:goodCond}
  Let $q$ be a power of $2$, $\Phi$ be the set of all quadratic functions over $\Fq$ and $a,b<q$ be integers. A monomial $m(x,y)=x^ay^b$ is $(\Phi,d)^*$-good if and only if
  \begin{align}\label{eq:good_mono_suf_nec_cond}
    2i+j+a \ (\Mod^*\ q) < d,\ \forall i\leq_2 b,\ j\leq_2 b-i\ .
  \end{align}
\end{lemma}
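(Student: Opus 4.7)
My plan is to directly expand the restriction $m|_\phi$ of the monomial $m(x,y) = x^a y^b$ to a quadratic curve $\phi(x,y) = y + \alpha x^2 + \beta x + \gamma$, read off the coefficients, and reduce modulo $x^q - x$. Since we are in characteristic $2$, the curve $\phi = 0$ gives $y = \alpha x^2 + \beta x + \gamma$, so
\begin{equation*}
m|_\phi(x) \;=\; x^a\bigl(\alpha x^2 + \beta x + \gamma\bigr)^b \;=\; \sum_{\substack{i,j,k\geq 0\\ i+j+k=b}} \binom{b}{i,j,k}\,\alpha^i\beta^j\gamma^k\, x^{2i+j+a}.
\end{equation*}
Writing the multinomial coefficient as $\binom{b}{i}\binom{b-i}{j}$ and applying Lucas' theorem (\cref{lem:lucasThm}) with $p=2$, the surviving terms are exactly those with $i \leq_2 b$ and $j \leq_2 b-i$, each appearing with coefficient $1$. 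Reducing every $x^e$ modulo $x^q - x$ replaces the exponent $2i+j+a$ by $2i+j+a\modstarq$.

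For sufficiency, if \eqref{eq:good_mono_suf_nec_cond} holds then every exponent appearing in the reduced form of $m|_\phi$ is strictly less than $d$, regardless of which $(\alpha,\beta,\gamma)$ is chosen. Hence $\deg(m|_\phi) < d$ for every $\phi\in\Phi$ and $m$ is $(\Phi,d)^*$-good.

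For necessity, suppose some pair $(i_0,j_0)$ with $i_0\leq_2 b$ and $j_0\leq_2 b-i_0$ gives $e_0 := 2i_0+j_0+a\modstarq \geq d$; I want to exhibit a curve $\phi$ for which $\deg(m|_\phi)\geq d$. The coefficient of $x^{e_0}$ in the reduced polynomial is
\begin{equation*}
C_{e_0}(\alpha,\beta,\gamma) \;=\; \sum_{\substack{i\leq_2 b,\; j\leq_2 b-i\\ 2i+j+a\,\modstarq\, =\, e_0}} \alpha^{i}\beta^{j}\gamma^{b-i-j},
\end{equation*}
viewed as a polynomial in $\F_q[\alpha,\beta,\gamma]$. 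The key observation is that distinct admissible pairs $(i,j)$ produce distinct monomials $\alpha^i\beta^j\gamma^{b-i-j}$, so $C_{e_0}$ is not identically zero; in particular its coefficient of $\alpha^{i_0}\beta^{j_0}\gamma^{b-i_0-j_0}$ is $1$. Since every term of $C_{e_0}$ has total degree exactly $b$, the triple of individual degrees $(i_0,j_0,b-i_0-j_0)$ sums to $\deg(C_{e_0}) = b \leq q-1 < q$. Applying the Combinatorial Nullstellensatz (\cref{lem:nullstellansatz}) with $S_1 = S_2 = S_3 = \F_q$ yields $(\alpha,\beta,\gamma)\in\F_q^3$ such that $C_{e_0}(\alpha,\beta,\gamma)\neq 0$. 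For this $\phi$, the reduced restriction $m|_\phi$ has a nonzero coefficient at $x^{e_0}$ with $e_0\geq d$, so $\deg(m|_\phi)\geq d$ and $m$ is $(\Phi,d)^*$-bad, contradicting goodness.

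The main obstacle is verifying that Combinatorial Nullstellensatz is applicable to the leading monomial we care about, which comes down to two checks: that each admissible $(i,j)$ contributes a \emph{distinct} multivariate monomial in $(\alpha,\beta,\gamma)$ (guaranteed because $(i,j)$ determines $(i,j,b-i-j)$), and that all individual exponents are smaller than $q$ (guaranteed by $b\leq q-1$). Once these are in place, the remainder of the argument is the expansion and the mod $x^q-x$ reduction described above.
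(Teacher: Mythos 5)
Your proof is correct and follows essentially the same route as the paper's: expand $m|_\phi$, use Lucas' theorem to keep exactly the terms with $i\leq_2 b$, $j\leq_2 b-i$, reduce modulo $x^q-x$, and invoke the Combinatorial Nullstellensatz on the coefficient polynomial in $(\alpha,\beta,\gamma)$ for the necessity direction. If anything, you are slightly more explicit than the paper in justifying why that coefficient polynomial is nonzero (distinct admissible pairs $(i,j)$ give distinct monomials $\alpha^i\beta^j\gamma^{b-i-j}$, so no cancellation), which is a welcome addition rather than a deviation.
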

\versionShortLong{}{
\begin{proof}
  We write a quadratic function $\phi\in\Phi$ as $\phi(x,y)=y+\alpha x^2+\beta x+\gamma $. Then the monomial restricted to curve $\phi$ can be written as
  \begin{align*}
    m|_{\phi}(x)&=x^a(\alpha x^2+\beta x+\gamma)^b\\
             &=x^a\sum_{i=0}^b\binom{b}{i}\alpha^ix^{2i}\cdot (\beta x+\gamma)^{b-i}\\
             &=\sum_{i=0}^b\binom{b}{i}\alpha^ix^{2i+a}\cdot \sum_{j=0}^{b-i}\binom{b-i}{j} \beta^jx^j\cdot \gamma^{b-i-j}\\
                &\overset{(*)}{=} \sum_{i\leq_2 b}\alpha^ix^{2i+a}\cdot \sum_{j\leq_2 b-i} \beta^jx^j\cdot \gamma^{b-i-j}\\
    &=\sum_{i\leq_2 b}\ \sum_{j\leq_2 b-i}\alpha^i\cdot \beta^j\cdot \gamma^{b-i-j}\cdot x^{2i+j+a}
  \end{align*}
  where the equality $(*)$ follows from the Lucas' Theorem (\cref{lem:lucasThm}). If the condition~\eqref{eq:good_mono_suf_nec_cond} in the statement holds, then $\deg m_\phi(x)<d$, and the ``if'' direction is proved.

  Denote $m|_\phi^*(x)= m|_{\phi}(x)\ \Mod\ x^q-x$. The coefficient of $x^s$ in $m|_\phi^*(x)$ is
    \begin{align*}
      [x^s]m|_\phi^* = \sum_{\substack{i\leq_2 b,\ j\leq_2 b-i\\2i+j+a\modstarq =s}}\alpha^i\cdot \beta^j\cdot \gamma^{b-i-j}
    \end{align*}
    We can see this as a polynomial in $\alpha,\beta,\gamma$. Assume the condition~\eqref{eq:good_mono_suf_nec_cond} does not hold but $m(x,y)$ is $(\Phi,d)^*$-good, i.e., for any $s\geq d$, $[x^s]m|_\phi^*$ is not a zero polynomial but equal to $0$ evaluated at all $(\alpha,\beta,\gamma)\in\Fq^3$. However, by~\cref{lem:nullstellansatz}, since $i,j,b-i-j<q$, there exists some $(\alpha_0,\beta_0,\gamma_0)$ such that $[x^s]m|_\phi^*\neq 0$. By contradiction it can be seen that the condition~\eqref{eq:good_mono_suf_nec_cond} is also a necessary condition.
\qed \end{proof}}
\subsection{Dimension of Quadratic-Curve-Lifted RS Codes}
The first important result is that the dimension of the code is exactly the number of good monomials, which we present in~\cref{thm:dim_good}.
In order to show that, we first discuss in the following lemma a special case that will be excluded in the proof of~\cref{thm:dim_good}. Due to space limitations, we leave out the proof of the lemma here and refer to the full version of this paper~\cite{liu2021quadratic}.
\begin{lemma}\label{lem:difference_q-1}
  Consider two monomials $m_1(x,y)=x^{q-1}y^b$ and $m_2(x,y)=y^b$ with $0\leq b\leq q-1$ and a polynomial $P(x,y)$ containing $m_1$ and $m_2$, i.e.,
  \begin{align*}
    P(x,y) &= (\xi_1 x^{q-1}y^{b}+ \xi_2 y^b) + P'(x,y)
  \end{align*}
  where $\xi_1,\xi_2\neq 0$ and $P'(x,y)$ does not contain $m_1$ or $m_2$. Then, $P$ is $(\Phi,d)^*$-bad for any $d\leq q-1$.
\end{lemma}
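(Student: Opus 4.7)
To show $P$ is $(\Phi, d)^*$-bad for $d \leq q-1$, I will exhibit one specific $\phi \in \Phi$ whose reduced restriction $P|_\phi^*$ has a nonzero $x^{q-1}$ coefficient, which immediately gives $\deg P|_\phi^* = q-1 \geq d$. The key observation is that $\Phi$ contains the ``degenerate'' horizontal line $\phi_\gamma(x,y) := y + \gamma$, obtained by setting $\alpha = \beta = 0$ in~\eqref{eq:Phi}. Restricting to $\phi_\gamma$ is just the substitution $y \mapsto \gamma$, which decouples the $x$-degrees entirely from the $y$-structure of $P$ and avoids the messy cross-terms one otherwise gets from $(\alpha x^2 + \beta x + \gamma)^{b'}$.

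\textbf{Plan.}
Write $P(x,y) = \sum_{a, b' = 0}^{q-1} p_{a, b'}\, x^a y^{b'}$ so that, by hypothesis, $p_{q-1, b} = \xi_1$, $p_{0, b} = \xi_2$, and $P'$ accounts for the remaining nonzero $p_{a, b'}$. The restriction $P|_{\phi_\gamma}(x) = P(x, \gamma)$ is a univariate polynomial in $x$ of degree at most $q-1$, so no further reduction modulo $x^q - x$ is needed. Extract the $x^{q-1}$-coefficient as a univariate polynomial in $\gamma$:
\begin{align*}
  g(\gamma) \;:=\; [x^{q-1}]\,P(x, \gamma) \;=\; \sum_{b' = 0}^{q-1} p_{q-1, b'}\, \gamma^{b'} \;\in\; \Fq[\gamma].
\end{align*}
The coefficient of $\gamma^{b}$ in $g$ equals $p_{q-1, b} = \xi_1 \neq 0$, so $g$ is a nonzero polynomial of degree at most $q-1$ over $\Fq$. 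A nonzero polynomial of degree at most $q-1$ over $\Fq$ has at most $q-1 < q$ roots in $\Fq$, hence some $\gamma_0 \in \Fq$ satisfies $g(\gamma_0) \neq 0$. For this choice, $[x^{q-1}] P|_{\phi_{\gamma_0}}(x) = g(\gamma_0) \neq 0$, so $\deg P|_{\phi_{\gamma_0}}^* = q - 1 \geq d$ and $P$ is $(\Phi, d)^*$-bad.

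\textbf{Main obstacle.}
The apparent difficulty is a generic quadratic curve with $\alpha \neq 0$: there $y^{b'}|_\phi = (\alpha x^2 + \beta x + \gamma)^{b'}$ spreads across many $x$-powers, so in principle monomials of $P'$ could conspire to cancel the $x^{q-1}$-contribution of $m_1$. A direct attack along these lines would most likely invoke~\cref{lem:nullstellansatz} on the three variables $(\alpha, \beta, \gamma)$ and require a careful bookkeeping of which $(i, j)$ pairs can produce an exponent reducing to $q-1$ via $\Mod^*\,q$. Restricting to the horizontal sub-family of $\Phi$ side-steps all of this: only the ``row'' $\{p_{q-1, b'}\}_{b'}$ of $P$ enters $[x^{q-1}] P|_{\phi_\gamma}$, and the guaranteed nonzero entry $\xi_1$ at $b' = b$ produces a non-cancellable $\gamma^b$-term in $g$. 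The remainder is the elementary fact that a nonzero univariate polynomial of degree $< q$ over $\Fq$ cannot vanish on all of $\Fq$. Note that this argument uses only $\xi_1 \neq 0$; the condition $\xi_2 \neq 0$ in the hypothesis presumably reflects the structural setting in which this lemma is invoked later in the proof of the dimension formula, rather than being essential for the ``bad''-ness conclusion itself.
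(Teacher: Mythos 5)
Your proof is correct and takes essentially the same route as the paper's: restrict to the horizontal curves $y=\gamma$ (i.e.\ $\alpha=\beta=0$), note that the $x^{q-1}$-coefficient is a nonzero polynomial of degree at most $q-1$ in $\gamma$ because the $\gamma^{b}$-term $\xi_1$ cannot be cancelled (as $P'$ contains no $x^{q-1}y^{b}$ monomial), and choose $\gamma_0$ where it does not vanish; the paper invokes the Combinatorial Nullstellensatz at this last step, which in one variable is exactly your elementary root-counting argument. Your closing observation that only $\xi_1\neq 0$ is actually needed for the badness conclusion is also accurate.
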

\versionShortLong{}{
\begin{proof}
  Consider $P$ restricted to the curve $\phi:y=\gamma$ for some $\gamma\in \Fq$, i.e., $\alpha=\beta=0$,
\begin{align*}
    P|_\phi(x) &= \gamma^b (\xi_1 x^{q-1}+ \xi_2) + P'(x,y=\gamma) \ .
\end{align*}
First, observe that for this choice of $\alpha,\beta$ this polynomial is of $x$-degree at most $q-1$ and we are only interested in the coefficient of $x^{q-1}$. Further, the only monomials of $P'(x,y)$ that contribute to this coefficient are of the form $\xi' x^{q-1} y^{b'}$ with $\xi'\neq 0$, since we replace $y$ with a function that is of $x$-degree zero, i.e., with $\gamma$. Since $P'(x,y)$ does not contain the monomials of the pair $m_c, m_d$ by definition, we conclude that $b' \neq b$. Now consider the coefficient of $x^{q-1}$ in $P|_\phi(x)$
\begin{align*}
    [x^{q-1}]P|_\phi = \underbrace{\gamma^b \xi_1}_{\text{from }m_1+m_2} + \underbrace{\gamma^{b'} \xi' + \ldots}_{\text{from } P'(x,y)} \ .
\end{align*}
We view this as a polynomial in $\gamma$. Since $b\neq b'$ and $\xi_1,\xi'\neq 0$ this is not the all-zero polynomial. Also, as $b,b'\in[q-1]$ this is a polynomial of degree $\leq q-1$ in $\gamma$. By~\cref{lem:nullstellansatz}, there exists $\gamma\in\Fq$ such that $[x^{q-1}]P|_{\phi}\neq 0$, which means $P|_{\phi}$ is of degree $q-1$ for some $\gamma$. Therefore, $P$ is $(\Phi,d)^*$-bad according to~\cref{def:goodMonomials} for any $d\leq q-1$.
\qed \end{proof}}
\begin{theorem}[Dimension is the number of good monomials]
  \label{thm:dim_good}
  Let $d\leq q-1$ and $\Phi$ be the set of all quadratic functions. The QC-LRS code $\cC_{q}(\Phi,d)$ has dimension equal to the number of $(\Phi,d)^*$-good monomials over $\Fq$.
\end{theorem}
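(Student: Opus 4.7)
The plan is to establish both inequalities between $\dim \cC_q(\Phi,d)$ and the number $N$ of $(\Phi,d)^*$-good monomials. The lower bound is immediate: distinct reduced monomials $x^a y^b$ with $a, b \in [0, q-1]$ are linearly independent as functions $\Fq^2 \to \Fq$, and each good monomial is a codeword by~\cref{def:goodMonomials}. The upper bound reduces to the following key claim: any polynomial $B$ whose support consists only of bad monomials and that lies in $\cC_q(\Phi,d)$ must be zero. Indeed, any codeword $P$ decomposes uniquely as $P = G + B$ with $G$ in the span of good monomials and $B$ in the span of bad ones; $G$ is automatically a codeword, so $B = P - G$ is too, and the claim then forces $P = G$.

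To prove the claim, I would reuse the expansion of $m|_\phi$ already carried out inside the proof of~\cref{lem:goodCond}. Writing $B = \sum c_{a,b}\, x^a y^b$ and restricting to an arbitrary $\phi(x,y) = y + \alpha x^2 + \beta x + \gamma$, reducing modulo $x^q - x$, the coefficient of $x^s$ in $B|_\phi$ is an element of $\Fq[\alpha, \beta, \gamma]$ in which each of $\alpha, \beta, \gamma$ appears with individual degree at most $q - 1$. Membership of $B$ in $\cC_q(\Phi, d)$ forces this coefficient to vanish on all of $\Fq^3$ for every $s \geq d$, and the Combinatorial Nullstellensatz (\cref{lem:nullstellansatz}) promotes vanishing on $\Fq^3$ to being identically the zero polynomial. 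Reading off the coefficient of each monomial $\alpha^i \beta^j \gamma^k$ with $i \leq_2 i+j+k$ and $j \leq_2 j+k$ then yields, for each $s \geq d$,
\begin{align*}
\sum_{\substack{(a,\, i+j+k)\,\in\,\supp(B)\\ (2i+j+a)\modstarq \,=\, s}} c_{a,\, i+j+k} \;=\; 0.
\end{align*}

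Now assume $B \neq 0$ and pick any $(a^*, b^*) \in \supp(B)$. Since $x^{a^*} y^{b^*}$ is bad, \cref{lem:goodCond} furnishes $i^* \leq_2 b^*$ and $j^* \leq_2 b^* - i^*$ with $s^* := (2i^* + j^* + a^*)\modstarq \geq d$. Applying the displayed constraint with $(i, j, k) = (i^*, j^*, b^* - i^* - j^*)$ and $s = s^*$ gives a single linear relation among $\{c_{a, b^*} : a \in A\}$, where $A := \{a \in [0, q-1] : (2i^* + j^* + a)\modstarq = s^*\}$. A short case analysis of $(\Mod^*\ q)$---splitting on $s^* = 0$, $s^* = q-1$, or $s^* \in [1, q-2]$ and tracking how the arithmetic progression $\{2i^* + j^* + a : a \in [0, q-1]\}$ intersects the preimages of $s^*$ under $(\Mod^*\ q)$---shows $|A| \leq 2$, and that in the two-element case one necessarily has $A = \{0, q-1\}$. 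If the sum has only one active term, the relation forces $c_{a^*, b^*} = 0$, contradicting $(a^*, b^*) \in \supp(B)$. Otherwise both $y^{b^*}$ and $x^{q-1} y^{b^*}$ lie in $\supp(B)$ with nonzero and, since $\mathrm{char}(\Fq) = 2$, equal coefficients; then~\cref{lem:difference_q-1} declares $B$ to be $(\Phi, d)^*$-bad, again contradicting $B \in \cC_q(\Phi, d)$. Either horn of the dichotomy yields a contradiction, so $B = 0$.

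The step I expect to be most delicate is the case analysis pinning down $|A|$ and identifying its two-element configuration as exactly $\{0, q-1\}$: this is precisely what makes~\cref{lem:difference_q-1} the right hammer for eliminating the only surviving cancellation scenario. Without this tight characterization, several bad monomials could a priori cancel each other's contributions in the derived constraints, and the constraint system would be too weak to force $B = 0$.
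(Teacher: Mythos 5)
Your proposal is correct and follows essentially the same route as the paper's proof: decompose a codeword into good and bad parts, restrict to a quadratic curve, view the coefficients of $x^s$ for $s\geq d$ as polynomials in $\alpha,\beta,\gamma$, observe that distinct bad monomials can only collide when $b_c=b_d$ and $|a_c-a_d|\in\{0,q-1\}$, dispose of the $q-1$ collision via \cref{lem:difference_q-1}, and invoke the Combinatorial Nullstellensatz (\cref{lem:nullstellansatz}). The only differences are organizational---you apply the Nullstellensatz in contrapositive form and then extract coefficients monomial by monomial, and you make the linear-independence/spanning bookkeeping explicit---which the paper leaves implicit but which does not change the argument.
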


\begin{proof}
Assume a polynomial $P$ containing $(\Phi,d)^*$-bad monomials is $(\Phi,d)^*$-good.
Let $\cG$ and $\cB$ be subsets of indices of all $(\Phi,d)^*$-good and -bad monomials, respectively (assuming the monomials are ordered according to some order).
We can write $P$ as
\begin{align*}
    P=\sum_{c\in\cG} \xi_c x^{a_c}y^{b_c}+\sum_{c\in\cB} \xi_c x^{a_c}y^{b_c},
\end{align*}
with $\xi_c\in \F_q\setminus\{0\}$. Restricting $P$ to the quadratic curve $\phi:y=\alpha x^2+\beta x +\gamma$ is the univariate polynomial
\begin{align*}
    P|_{\phi}&=\sum_{c\in\cG\cup\cB} \xi_c x^{a_c}(\alpha x^2+\beta x+\gamma)^{b_c}\\
    &=\sum_{c\in\cG\cup\cB} \xi_c \sum_{i=0}^{b_c}\sum_{j=0}^{b_c-i}\binom{b_c}{i}\binom{b_c-i}{j}\alpha^i\cdot \beta^j\cdot \gamma^{b_c-i-j}\cdot x^{2i+j+a_c}.
\end{align*}

Let $P|_{\phi}^*=P|_{\phi}\ \Mod\ (x^{q}-x)$. Denote by $[x^s]P|_{\phi}^*$ the coefficient of $x^s$ in $P|_{\phi}^*$. \new{By Lucas' Theorem}
(see~\cref{lem:lucasThm}), we have
\begin{align}
    [x^{s}]P|_{\phi}^*=&\sum_{c\in\cG\cup\cB}\sum_{\substack{i\leq_2 b_c,\ j\leq_2 b_c-i\\2i+j+a_c\modstarq = s}}\xi_c\cdot  \alpha^i\cdot \beta^j\cdot \gamma^{b_c-i-j}\nonumber.
\end{align}
For $s\geq d$, the $(\Phi,d)^*$-good monomials do not contribute to these coefficients (see~\cref{def:goodMonomials}), therefore,
\begin{align}
    [x^{s}]P|_{\phi}^*{=}&\sum_{c\in\cB} \sum_{\substack{i\leq_2 b_c,j\leq_2 b_c-i\\2i+j+a_c\modstarq = s}}\xi_c\cdot \alpha^i\cdot \beta^j\cdot \gamma^{b_c-i-j}\quad \text{for }s\geq d .\label{eq:poly_alpha_beta_gamma}
\end{align}

We view $[x^{s}]P|_{\phi}^*$ as a trivariate polynomial in $\alpha,\beta,\gamma$.
Note that $P$ is $(\Phi,d)^*$-good only if
\begin{align}
\label{eq:P-good-cond}
    [x^{s}]P|_{\phi}^*\ (\alpha,\beta,\gamma)=0\ ,\quad \forall \alpha,\beta,\gamma\in\Fq, \forall s\geq d\ .
\end{align}

Now consider two bad monomials $x^{a_c}y^{b_c}$ and $x^{a_d}y^{b_d}$ with $c,d\in\cB$. 
Then the corresponding terms in~\eqref{eq:poly_alpha_beta_gamma} contributed by them can be \new{added up}
only if $\alpha^{i_c}\beta^{j_c}\gamma^{b_c-i_c-j_c}=\alpha^{i_d}\beta^{j_d}\gamma^{b_d-i_d-j_d}$, which is true if and only if
\begin{align*}
    \iff &\begin{cases}
      &i_c=i_d\\
      &j_c=j_d\\
      &b_c-i_c-j_c=b_d-i_d-j_d\\
      &2i_c+j_c+a_c \modstarq = 2i_d+j_d+a_d\modstarq
    \end{cases}\\
     \Longrightarrow &\begin{cases}
      &b_c=b_d\\
      &|a_c-a_d|=0 \textrm{ or } q-1\ .
    \end{cases}
\end{align*}
For the case $|a_c-a_d|=q-1$, such polynomials are bad according to~\cref{lem:difference_q-1}.
For the case $|a_c-a_d|=0$, we can conclude that the monomials $\alpha^i\beta^j\gamma^{b_c-i-j}$ are distinct for different pairs of $(a_c,b_c)$. Namely, \eqref{eq:poly_alpha_beta_gamma} is in its simplest form\footnote{No similar terms can be further combined.}.

Assume $\cB$ is non-empty. Since $\xi_c\neq 0$ for all $c$, \eqref{eq:poly_alpha_beta_gamma} is a non-zero polynomial.
By~\cref{lem:nullstellansatz}, since the variables $\alpha,\beta,\gamma\in\Fq$ and all exponents $i,j,b_c-i-j<q$, there exists some $\alpha_0,\beta_0,\gamma_0\in\Fq$, such that $[x^{s}]P|_{\phi}^*\neq 0$. This contradicts the assumption that $P$ is $(\Phi,d)^*$-good. This implies that \eqref{eq:P-good-cond} can be fulfilled only if $[x^{s}]P|_{\phi}^*$ is a zero polynomial, i.e., $\cB$ is empty. Hence, a polynomial $P$ is $(\Phi,d)^*$-good only if it only consists of good monomials.
\qed \end{proof}

\subsection{Estimation of the Dimension}
In this section we provide an analysis of the dimension of QC-LRS codes $\cC_{q}(\Phi,d=q-r)$, where $q=2^\ell$ and $r\in[q-1]$.
Recall from Lemma~\ref{lem:goodCond} that a monomial $m(x,y)=x^ay^b$ is $(\Phi,q-r)^*$-bad if and only if there exist $i\leq_2 b$ and $j\leq_2 b-i$ such that $2i+j+a \ (\Mod^*\ q)\geq q-r$.
We will first consider a slightly different definition of a bad monomial to simplify our arguments. Then, we derive upper and lower bounds on the number of  $(\Phi,q-r)^*$-bad monomials and further establish the results on the rate of QC-LRS codes.
\subsubsection{Counting $(\Phi,q-r)$-bad monomials: }
Let $q=2^\ell$ and $r\in[q-1]$. We say that a monomial $m(x,y)=x^ay^b$ (or the pair $(a,b)$) is $(\Phi,q-r)$-bad if and only if there exist $i\leq_2 b$ and $j\leq_2 b-i$ such that $2i+j+a \pmod{q} \geq q-r$. For an integer $t\geq 0$, we define
\begin{align}
    S_t(\ell)&=\bigg\{ (a,b)\in \mathbb{Z}_q^2\ :\ \begin{split}
        &\exists\ i\leq_2 b, j\leq_2 b-i , \\
        &\textrm{s.t.}\ 2i+j+a=q-r'+tq, \textrm{ for some }r'\in[r]
    \end{split}\ \bigg\}
    \label{eq:def_S_t}
\end{align}
For $1\leq r<q$ and $t\geq 3$, the set $S_t(\ell)$ is empty as $2i+j+a\leq i+b+a\leq 2b+a\leq 3(q-1)< q-r+tq$. Hence, if $x^ay^b$ is $(\Phi,q-r)$-bad, then $(a,b)\in S_0(\ell)\cup S_1(\ell)\cup S_2(\ell)$.

In what follows, we assume that  $1\leq r< q$ and attempt to derive some recursive relations on $S_0(\ell)$, $S_1(\ell)$ and $S_2(\ell)$. We have two observations in \cref{lem:go-down} and \cref{lem:down-by-one}.
\begin{lemma}\label{lem:go-down}
Let $q=2^\ell$ \new{and $r<\frac{q}{2}$}, $a=(a_0,\ldots,a_{\ell-1})_2$ and $b=(b_0,\ldots, b_{\ell-1})_2$. Define $a':=(a_0,\ldots,a_{\ell-2})_2$ and $b':=(b_0,\ldots, b_{\ell-2})_2$. If $(a,b)\in S_0(\ell)\cup S_{1}(\ell)\cup S_{2}(\ell)$, then $(a',b')\in S_0(\ell-1)\cup S_{1}(\ell-1)\cup S_2(\ell-1)$.
\end{lemma}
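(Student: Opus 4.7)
The plan is to construct an explicit witness for $(a',b')$ from a witness for $(a,b)$. Suppose $(a,b)\in S_t(\ell)$ for some $t\in\{0,1,2\}$, witnessed by $i\leq_2 b$, $j\leq_2 b-i$ and $r'\in[r]$ with $2i+j+a=(t+1)q-r'$. First I would decompose each quantity into its top bit and lower part: $i=i'+2^{\ell-1}i_{\ell-1}$, $j=j'+2^{\ell-1}j_{\ell-1}$ and $a=a'+2^{\ell-1}a_{\ell-1}$. Because $i\leq_2 b$ is bitwise, $b-i$ incurs no borrow and $(b-i)_k=b_k-i_k$; hence the shadow relations restrict cleanly to the lower $\ell-1$ bits, yielding $i'\leq_2 b'$ and $j'\leq_2 b'-i'$. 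An immediate byproduct is that $i$ and $j$ have disjoint bit supports (if $i_k=1$ then $(b-i)_k=0$ and so $j_k=0$), so $i+j=i\lor j\leq b$, and in particular $i'+j'\leq b'<q/2$.

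Next I would establish the key algebraic identity. Setting $C:=2i_{\ell-1}+j_{\ell-1}+a_{\ell-1}\in\{0,\dots,4\}$ and $T:=2i'+j'+a'$, substituting the decompositions gives
\begin{align*}
2i+j+a \;=\; T + q\,i_{\ell-1} + \frac{q}{2}(j_{\ell-1}+a_{\ell-1}) \;=\; T + \frac{q}{2}\,C\ ,
\end{align*}
so that after rearranging,
\begin{align*}
T \;=\; (t+1)q - r' - \frac{q}{2}\,C \;=\; (t'+1)\,\frac{q}{2} - r', \qquad t' := 2t+1-C\ .
\end{align*}
This is exactly the defining equation for $(a',b')\in S_{t'}(\ell-1)$ with witness $(i',j',r')$, provided $t'\in\{0,1,2\}$.

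The main obstacle is verifying $t'\in\{0,1,2\}$, since a priori $t'\in\{-3,\dots,5\}$; the forbidden cases are $(t,C)\in\{(0,\geq 2),\,(1,0),\,(1,4),\,(2,\leq 2)\}$. I would rule them out by combining the top-bit pattern dictated by $C$ with the bit-disjointness of $i$ and $j$ and the hypothesis $r<q/2$. For the representative case $t=1$, $C=0$, all three top bits vanish, so $i,j,a<q/2$ and disjointness forces $i+j<q/2$, whence $2i+j+a<3q/2$; but the witness requires $2i+j+a=2q-r'>3q/2$ since $r'\leq r<q/2$, a contradiction. The remaining forbidden cases are analogous: each top-bit pattern produces an upper bound on $2i+j+a$ strictly incompatible with $(t+1)q-r'$ once $r<q/2$ is invoked (for example, $t=0$ with $C\geq 2$ forces either $i_{\ell-1}=1$ or $j_{\ell-1}=a_{\ell-1}=1$, both of which give $2i+j+a\geq q>q-r'$; the cases $t=1$, $C=4$ and $t=2$, $C\leq 2$ are handled by the same combination of bit-disjointness and the strict gap $r<q/2$).
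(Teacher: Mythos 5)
Your proof is correct and follows essentially the same route as the paper: drop the most significant bits of $a,b,i,j$ and observe that the shadow relations $i'\leq_2 b'$, $j'\leq_2 b'-i'$ survive, so the truncated witness works at level $\ell-1$. The paper compresses your case analysis on $(t,C)$ into a single congruence: since $2i'+j'+a'\equiv 2i+j+a\equiv -r'\pmod{q/2}$ and $0\leq 2i'+j'+a'\leq 3(q/2-1)$, the value is automatically of the form $q/2-r'+t'\cdot q/2$ with $t'\in\{0,1,2\}$ (the hypothesis $r<q/2$ ensures the residue $q/2-r'$ is a genuine positive representative), so the explicit exclusion of the forbidden $(t,C)$ pairs you carry out is not actually needed, though it is a valid way to fill in that step.
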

\versionShortLong{}{
\begin{proof}
The condition $(a,b)\in S_0(\ell)\cup S_1(\ell)\cup S_{2}(\ell)$ implies that there exist $i=(i_0,\ldots,i_{\ell-1})_2\le_2 b$ and $j=(j_0,\ldots,j_{\ell-1})_2\le_2 b-i$ such that $2i+j+a= q-r' \pmod{q}$, where $r'\in[r]$.
Let $i':=(i_0,\ldots, i_{\ell-2})_2$ and $j':=(j_0,\ldots,j_{\ell-2})_2$. Clearly $i'\le_2 b'$ and $j'\le_2 b'-i'$ and  $2i'+j'+a' = \frac{q}{2}-r' \pmod{\frac{q}{2}}$.
\qed \end{proof}}

\begin{lemma}\label{lem:down-by-one}
  For $t=1,2$, if $(a,b)\in S_t(\ell)$, then $(a,b)\in S_{t-1}(\ell)$. 
\end{lemma}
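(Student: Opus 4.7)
The plan is to produce, given a witness $(i,j)$ for $(a,b) \in S_t(\ell)$ with $t \in \{1,2\}$, a smaller pair $(i',j') = (i - \Delta_i,\, j - \Delta_j)$ that witnesses $(a,b) \in S_{t-1}(\ell)$. First I would verify that the $\leq_2$ constraints survive any such subtraction as long as $\Delta_i \leq_2 i$ and $\Delta_j \leq_2 j$: clearly $i' \leq_2 i \leq_2 b$, and since $\supp(i)$ and $\supp(b-i)$ are disjoint (a direct consequence of $i \leq_2 b$, as $(b-i)_k = b_k(1-i_k)$), the bits of $\Delta_i$ do not meet those of $b-i$, so $b - i' = (b-i) + \Delta_i$ is a bit-superset of $b-i$, yielding $j' \leq_2 j \leq_2 b-i \leq_2 b-i'$.

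Next, the equation $2i' + j' + a = tq - r''$ defining membership in $S_{t-1}(\ell)$, subtracted from the hypothesis $2i + j + a = (t+1)q - r'$, rearranges to $2\Delta_i + \Delta_j = q + (r''-r')$. Taking $r'' := r'$ reduces the task to exhibiting $\Delta_i \leq_2 i$ and $\Delta_j \leq_2 j$ with $2\Delta_i + \Delta_j = q$. For $t=2$ the budget is always sufficient because $2i + j = 3q - r' - a \geq 2q - r + 1 > q$. For $t=1$ the quantity $2i + j = 2q - r' - a$ may drop below $q$, but only when $a \geq q - r' + 1 \geq q - r + 1$, in which case the trivial pair $(i',j') = (0,0)$ already gives $2i' + j' + a = a = q - r''$ with $r'' := q - a \in [r]$, putting $(a,b) \in S_0(\ell)$ directly.

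The crux is a subset-sum claim which I would establish by induction on $\ell$: any multiset whose elements are powers of $2$ bounded by $2^\ell$ and whose total is at least $2^\ell$ has a sub-multiset summing to exactly $2^\ell$. The inductive step is quick---if $2^\ell$ lies in the multiset, take it; otherwise every element is at most $2^{\ell-1}$, so the inductive hypothesis applied first to the whole multiset and then to its remainder produces two disjoint sub-multisets each of weight $2^{\ell-1}$, whose union weighs $2^\ell$. Instantiating this with $\{2^{k+1} : i_k = 1\} \cup \{2^k : j_k = 1\}$, whose total is $2i + j$ and whose maximum is $\leq 2^\ell = q$, yields the required $(\Delta_i, \Delta_j)$.

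I expect the trickiest step to be the multiset nature of the subset-sum argument: the value $2^k$ may occur twice (once from a bit of $i$ and once from a bit of $j$, when $i_{k-1} = j_k = 1$), and the induction must be formulated for multisets so that this duplication is handled cleanly. Once $(\Delta_i, \Delta_j)$ is in hand, setting $i' = i - \Delta_i$ and $j' = j - \Delta_j$ gives $2i' + j' + a = (t+1)q - r' - q = tq - r'$, certifying $(a,b) \in S_{t-1}(\ell)$ via the same $r' \in [r]$.
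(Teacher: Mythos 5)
Your proof is correct, and it follows the same high-level strategy as the paper's: keep the same witness bits of $(i,j)$ except for a selection $\Delta_i\leq_2 i$, $\Delta_j\leq_2 j$ with $2\Delta_i+\Delta_j=q$, so that $2i'+j'+a$ drops by exactly $q$ while the constraints $i'\leq_2 b$, $j'\leq_2 b-i'$ persist (your observation that $b-i'=(b-i)+\Delta_i$ is a bit-superset of $b-i$ is a detail the paper leaves implicit), and dispose of the $t=1$ corner case $2i+j<q$ by the trivial witness $i'=j'=0$ with $r''=q-a$, just as the paper does for $a\in[q-r,q-1]$. Where you genuinely diverge is in how the exact deduction of $q$ is certified: the paper runs an explicit greedy bit-scanning procedure (its ``Deduct $q$'' algorithm) from the most significant bits downward and argues by contradiction that the algorithm must terminate in the branch that produces the exact decrement, whereas you prove a clean standalone lemma---any multiset of powers of $2$, each at most $2^\ell$, with total at least $2^\ell$, contains a sub-multiset summing to exactly $2^\ell$---by induction on $\ell$, and instantiate it with the tagged multiset $\{2^{k+1}:i_k=1\}\cup\{2^k:j_k=1\}$ of total $2i+j\geq q$. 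Your route is arguably cleaner and more self-contained: the inductive lemma is easy to verify and reusable, and the multiset formulation correctly handles the duplication that occurs when $i_{k-1}=j_k=1$, which is exactly the bookkeeping the paper's algorithm manages operationally; the paper's algorithmic version, on the other hand, is constructive in a more explicit way (it is also reused in the proof of its Lemma~5 to move between the sets $S_t$). Both case splits are equivalent (your condition $2i+j<q$ corresponds to the paper's $a\geq q-r'$ regime), and your choice $r''=r'$ in the main case matches the paper's, so the statement follows with the same witness range $r''\in[r]$.
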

\versionShortLong{}{
\begin{proof}
  We first prove for $t=1$. The condition $(a,b)\in S_1(\ell)$ implies that there exists an $i\leq_2 b$ and an $j\leq_2 b-i$ with $2i+j+a = 2q-r'$ with $r'\in[r]$. The statement $(a,b)\in S_0(\ell)$ means that there exists $i'\leq_2 i$ and $j'\leq_2 j$ such that $2i'+j'+a \in[ q-r,q-1]$. Note that for $q-r\leq a\leq q-1$ the statement holds with $i'=j'=0$. Assuming $a< q-r$, we claim the existence of a pair $(i'\leq_2 i, j'\leq_2 j)$ such that $2i'+j'+a = q-r'$ which would imply the required statement. Such $i',j'$ can be found by the procedure in \cref{algo:ij-reduction} that replaces some ones in the binary representations of $i$ and $j$ by zeros so that $(2i+j)-(2i'+j')=q$.

  The procedure outputs the correct $i',j'$ for $2i+j>q$ if we enter the $\delta\leq 0$ else-part (\cref{line:deltaleq0}) in Step 2 at some point. Assume the contrary that this does not happen, meaning that the procedure output the all-zero $i',j'$ at the end. However, this implies that $\delta=q-(2i+j)>0$ which contradicts the condition that $2i+j+a\geq 2q-r$ while $a<q-r$.

  For $t=2$, given $i,j$ such that $2i+j+a=3q-r'$, which implies that $2i+j>q$, we can find $i',j'$ by \cref{algo:ij-reduction} such that $2i'+j'+a=2q-r'$. This complete the proof.
\qed
\end{proof}

\begin{algorithm}
\caption{Deduct $q$}\label{algo:ij-reduction}
\DontPrintSemicolon
\SetKw{Init}{Init:}
\KwIn{$i,j$}
\KwOut{$i',j'$}
\Init{$i'\leftarrow i,j'\leftarrow j, h\leftarrow \ell,\Delta\leftarrow 1$}\\
\uIf{$h=0$\label{line:h_equal_0}}{ \Return{$i',j'$}}
Let $h\leftarrow h-1$ and $\Delta\leftarrow 2\Delta$\label{line:Delta}\\
Compute $\delta\leftarrow \Delta-i'_{h-1}-j'_h$\\
\uIf{$\delta>0$}{$i'_{h-1}\leftarrow 0,\ j'_h\leftarrow 0$\\
\textbf{Go back to Line}~\ref{line:h_equal_0}}
\uElse{\label{line:deltaleq0}
Let $
                \begin{cases}
                    i'_{h-1}\leftarrow 0,\quad & \textbf{if }\Delta-i'_{h-1}=0\\
                    j'_{h}\leftarrow 0,\quad & \textbf{if }\Delta-j'_{h}=0\\
                    i'_{h-1}\leftarrow 0,j'_h\leftarrow 0, \quad & \textbf{if }\Delta-i'_{h-1}-j_h=0
                \end{cases}
$\\
\Return $i',j'$}
\end{algorithm}
\begin{example}
Consider the parameters $q=2^\ell=2^4, r'=2$. In the following we may also use the binary representation for the integers, e.g., $a=(a_0,\dots,a_{\ell-1})_2$. For the element $(a,b)=(12,14)=((0011)_2,(0111)_2)\in S_1(4)$ and $i=(0010)_2,j=(0101)_2$ such that $i\leq_2 b,j\leq_2 b-i$ and $2i+j+a=2q-r'=(01111)_2$, we can find the corresponding $i'\leq_2 i,j'\leq_2 j$ such that $2i'+j'+a=q-r'$ by \cref{algo:ij-reduction}:
\begin{enumerate}
    \item \textbf{Init:} $i'\leftarrow (0010)_2,j'\leftarrow (0101)_2$, $\Delta=1$ and $h\leftarrow 4$.
    \item Let $h\leftarrow 3, \Delta\leftarrow 2$, compute $\delta\leftarrow \Delta-i'_3-j'_4=2-1-1=0$.
    Since $\delta\not>0$ and $\Delta-i'_3-j'_4=0$, $i'_3\leftarrow 0$, $j'_4\leftarrow 0$ and output $i'=(0000)_2,j'=(0100)_2$.
\end{enumerate}
As $i'\leq_2i\leq b$, $j'\leq_2 j\leq_2 b-i$ and $2i'+j'+a=(0111)_2=q-r$, $(a,b)$ is in $S_0(\ell)$.
\end{example}}
It follows from Lemma~\ref{lem:down-by-one} that $x^ay^b$ is $(\Phi,q-r)$-bad if and only if $(a,b)\in S_0(\ell)$.

Based on the observations in \cref{lem:down-by-one} and \cref{lem:go-down}, we provide a recursive formula for computing the size of $S_t(\ell)$ for $t=0,1,2$.
\begin{lemma}\label{lem: key ingredient}
For $1\leq r< \frac{q}{2}$, it holds that
\begin{align*}
    |S_0(\ell)|&=3|S_0(\ell-1)|+|S_1(\ell-1)|,\\
    |S_1(\ell)|&=|S_0(\ell-1)|+|S_1(\ell-1)| + |S_2(\ell-1)|,\\
    |S_2(\ell)|&=|S_2(\ell-1)|.
\end{align*}
\end{lemma}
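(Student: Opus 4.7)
The plan is to prove the three recursions simultaneously by classifying each $(a, b) \in S_t(\ell)$ according to its most-significant bits $(a_{\ell-1}, b_{\ell-1})$ and reducing the defining equation of $S_t(\ell)$ to that of $S_{t'}(\ell-1)$ for a suitable $t'$. First I would write $a = a' + a_{\ell-1}\cdot q/2$, $b = b' + b_{\ell-1}\cdot q/2$, and similarly decompose any witness $i = i' + i_{\ell-1}\cdot q/2$, $j = j' + j_{\ell-1}\cdot q/2$ with the primed quantities in $\bbZ_{q/2}$. Since $i \leq_2 b$ guarantees that $b-i$ is computed without borrows, the constraints $i\leq_2 b$ and $j\leq_2 b-i$ decouple cleanly into $i'\leq_2 b'$, $j'\leq_2 b'-i'$ on the low bits together with $i_{\ell-1} \leq b_{\ell-1}$ and $i_{\ell-1} + j_{\ell-1} \leq b_{\ell-1}$ on the top bits.

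Substituting into $2i + j + a = q - r' + tq$ yields
\begin{align*}
2i' + j' + a' = \bigl(2(t+1) - \sigma - 2 i_{\ell-1}\bigr)\cdot q/2 - r',
\end{align*}
with $\sigma := a_{\ell-1} + j_{\ell-1} \in \{0,1,2\}$. This is precisely the defining equation for $(a', b') \in S_{t'}(\ell-1)$ with the same $r'$ and $t' = 2t + 1 - \sigma - 2 i_{\ell-1}$. Combining $0 \leq 2i' + j' + a' \leq 3q/2 - 3$ with the hypothesis $r < q/2$ forbids $t' \geq 3$ (which would force $r' \geq q/2 + 3$) and $t' < 0$ (which would force $2i'+j'+a' < 0$), so only the values $t' \in \{0,1,2\}$ arise.

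Then for each of the four top-bit configurations $(a_{\ell-1}, b_{\ell-1}) \in \{0,1\}^2$, I would enumerate the admissible $(i_{\ell-1}, j_{\ell-1})$ from the shadow constraint, collect the set $T \subseteq \{0,1,2\}$ of the corresponding $t'$ values, and observe that $(a, b) \in S_t(\ell)$ with these top bits iff $(a', b') \in \bigcup_{t' \in T} S_{t'}(\ell-1)$. By the nestedness $S_2(\ell-1) \subseteq S_1(\ell-1) \subseteq S_0(\ell-1)$ established in \cref{lem:down-by-one}, this union equals $S_{\min T}(\ell-1)$. A direct case-by-case enumeration then yields, for $t=0$, a contribution of $|S_1(\ell-1)|$ from $(0,0)$ and $|S_0(\ell-1)|$ from each of $(0,1), (1,0), (1,1)$; for $t=1$, contributions $|S_0(\ell-1)|, |S_1(\ell-1)|, |S_2(\ell-1)|$ from $(1,1), (0,1), (1,0)$ respectively and nothing from $(0,0)$; and for $t=2$, only $(1,1)$ contributes, via the unique admissible choice $(i_{\ell-1}, j_{\ell-1}) = (1,0)$ giving $t'=2$, hence exactly $|S_2(\ell-1)|$. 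Summing over the four configurations produces the three claimed identities.

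The main obstacle is handling the interplay between the existence quantifier in the definition of $S_t(\ell)$ and the top-bit enumeration: for each top-bit configuration, different choices of $(i_{\ell-1}, j_{\ell-1})$ yield different $t'$ values, and one must take the union over all admissible witnesses. The nestedness $S_2\subseteq S_1\subseteq S_0$ is what allows this union to collapse to a single term $S_{\min T}(\ell-1)$, producing a clean recursion. The hypothesis $r < q/2$ is essential, as it is exactly what excludes the spurious cases $t' \geq 3$ and makes the recursion close on the three sets $S_0, S_1, S_2$.
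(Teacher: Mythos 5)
Your proposal is correct and follows essentially the same route as the paper's proof: decompose $(a,b)$ and the witnesses $(i,j)$ by their most significant bits, use the nestedness $S_2(\ell-1)\subseteq S_1(\ell-1)\subseteq S_0(\ell-1)$ from \cref{lem:down-by-one}, and enumerate the four top-bit configurations. The only difference is bookkeeping: you sum over top-bit patterns of $(a,b)$ and collapse the union of feasible $S_{t'}(\ell-1)$ to $S_{\min T}(\ell-1)$, whereas the paper stratifies $(a',b')$ into $S_0\setminus S_1$, $S_1\setminus S_2$, $S_2$ and counts admissible top-bit extensions; the resulting case table is identical.
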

\versionShortLong{}{
\begin{proof}
  To obtain valid $S_t(\ell-1)$, we require $r<q^{\ell-1}=\frac{q}{2}$.
 According to Lemma~\ref{lem:go-down} and \cref{lem:down-by-one}, we know that if $(a,b)\in S_0(\ell)$, then $(a',b')\in S_0(\ell-1)\cup S_1(\ell-1)\cup S_2(\ell-1)$. The statement can be proven by counting how many ways to add the most significant bits $a_\ell$ and $b_\ell$ for $a'$ and $b'$ to obtain $a$ and $b$. Denote them by $a=[a',a_\ell],b=[b',b_\ell]$. Recall the definition in~\eqref{eq:def_S_t}, given $(a',b')\in S_t(\ell-1)$, there exist $i'\leq_2 b',j'\leq_2 b'-i'$ such that $2i'+j'+a'=\frac{q}{2}-r'+t\frac{q}{2}$ with $r'\in[r]$.
 Construct $i,j$ by appending one most significant bit to $i',j'$, i.e., $i=[i',i_\ell]$ and $j=[j',j_\ell]$ with $i_\ell\leq_2b_\ell$ and $j_\ell\leq_2b_\ell$. To obtain $(a,b)\in S_t(\ell)$, we require $2i+j+a=q-r'' + tq$ with $r''\in[r]$.
We can write
\begin{align*}
  2i+j+a=2i'+j'+a' + (2i_\ell+j_\ell+a_\ell)\frac{q}{2}\ .
\end{align*}
Since the difference between $2i+j+a$ and $2i'+j'+a'$ is always some multiple of $\frac{q}{2}$, $r''=r'$.

Recall that $S_2(\ell-1)\subset S_1(\ell-1)\subset S_0(\ell-1)$ from Lemma~\ref{lem:down-by-one},

We first prove $|S_0(\ell)|$. To have $(a,b)\in S_0(\ell)$, we require $2i+j+a=q-r'$. Consider three cases,
\begin{itemize}
\item
  Given $(a',b')\in S_0(\ell-1) \setminus S_1(\ell-1)$, it means 
  $2i'+j'+a'=\frac{q}{2}-r'$. To obtain $2i+j+a=q-r'$, we require $2i_\ell+j_\ell+a_\ell=1$. There are three options of $(a_\ell,b_\ell)$ that this can be fulfilled, i.e., $(a_\ell,b_\ell)=(1,0), (0,1)$ or $(1,1)$.
\item
  Given $(a',b')\in S_1(\ell-1)\setminus S_2(\ell-1)$, we have $2i'+j'+a'=q-r'$, the option $(0,0)$ for the most significant bit $(a_\ell,b_\ell)$ allow to get $(a,b)\in S_0(\ell)$. Since $S_1(\ell-1)\subset S_0(\ell-1)$, we can find $i''\leq_2 i'\leq_2 b'$ and $j''\leq_2 j'\leq_2 b-i$ such that $2i''+j''+a'=\frac{q}{2}-r'$ (e.g., by~\cref{algo:ij-reduction}). So all the other three options in the first case are also valid for this case.
\item
  Given $(a',b')\in S_2(\ell-1)$, we have $2i'+j'+a'=\frac{3}{2}q-r'$. Since $S_2(\ell-1)\subset S_1(\ell-1)$, all four options of $(a_\ell,b_\ell)$ allow to get $(a,b)\in S_0(\ell)$.
\end{itemize}

Then we show $|S_1(\ell)|$. We require $2i+j+a=2q-r'$. Again, consider the three cases,
\begin{itemize}
\item
  Given $(a',b')\in S_0(\ell-1) \setminus S_1(\ell-1)$, we have $2i'+j'+a'=\frac{q}{2}-r'$. This means that $2i_\ell+j_\ell+a_\ell=3$ is required. $(a_\ell,b_\ell)=(1,1)$ is the only way to add the most significant bits. 
\item
  Given $(a',b')\in S_1(\ell-1)\setminus S_2(\ell-1)$, we have $2i'+j'+a'=q-r'$. We require $2i_\ell+j_\ell+a_\ell=2$ to obtain $2i+j+a=2q-r'$. The two options $(0,1)$ and $(1,1)$ allow this.
\item
  Given $(a',b')\in S_2(\ell-1)$, we have $2i'+j'+a'=\frac{3}{2}q-r'$. We require $2i_\ell+j_\ell+a_\ell=1$ to obtain $2i+j+a=2q-r'$. The three options $(a_\ell,b_\ell)=(1,0), (0,1)$ or $(1,1)$ can fulfill this.
\end{itemize}
Now we show $|S_2(\ell)|$. We require $2i+j+a=3q-r'$. Consider the three cases,
\begin{itemize}
\item
  Given $(a',b')\in S_0(\ell-1) \setminus S_1(\ell-1)$, we have $2i'+j'+a'=\frac{q}{2}-r'$. This means that $2i_\ell+j_\ell+a_\ell=5$ is required. However, due to $j_\ell\leq_2 b_\ell -i_\ell$, this cannot happen since $i_\ell$ and $j_\ell$ cannot be one at the same time.
\item
  Given $(a',b')\in S_1(\ell-1)\setminus S_2(\ell-1)$, we have $2i'+j'+a'=q-r'$. We require $2i_\ell+j_\ell+a_\ell=4$. However this cannot happen since $i_\ell$ and $j_\ell$ cannot be one at the same time.
\item
  Given $(a',b')\in S_2(\ell-1)$, we have $2i'+j'+a'=\frac{3}{2}q-r'$. We require $2i_\ell+j_\ell+a_\ell=3$. $(a_\ell,b_\ell)=(1,1)$ is the only option.
\end{itemize}
To sum up, the statements follow from
\begin{align*}
  |S_0(\ell)|&=3(|S_0(\ell-1)\setminus S_1(\ell-1)|)+4(|S_1(\ell-1)\setminus S_2(\ell-1)|)+ 4|S_2(\ell-1)|\\
  |S_1(\ell)|&=|S_0(\ell-1)\setminus S_1(\ell-1)|+2|S_1(\ell-1)\setminus S_2(\ell-1)| +3|S_2(\ell-1)|\\
  |S_2(\ell)|&=|S_2(\ell-1)|\ .
\end{align*}
\qed \end{proof}}
Lemma~\ref{lem: key ingredient} yields a recurrence relation for $|S_0(\ell)|$, $|S_1(\ell)|$ and $|S_2(\ell)|$. For a given $r$, the initial value $\ell_0$ should be chosen such that $S_i(\ell_0), i=0,1,2$ is a valid set according to the definition in~\eqref{eq:def_S_t}.
\new{Denote by $\bs(\ell) = (|S_0(\ell)|,|S_1(\ell)|,|S_2(\ell)|)^\top$.}
We then have
\begin{align}\label{eq:matA}
\new{\bs(\ell)= \bA^{\ell-\ell_0}\cdot \bs(\ell_0),\ \text{ where } \bA = \begin{pmatrix}
3 & 1 & 0\\
1 & 1 & 1\\
0 & 0 & 1
\end{pmatrix}.
}
\end{align}
The \new{recursion} enables us to find the asymptotic behavior of the number of $(\Phi,q-r)$-bad monomials, which is exactly $|S_0(\ell)|$.
Note that the order of $|S_j(\ell)|, j=0,1,2$ is controlled by $\lambda_1^\ell$, where $\lambda_1=2+\sqrt{2}$ is the largest eigenvalue of $\bA$ \new{in~\eqref{eq:matA}}.
Hence,
\begin{align}\label{eq:S0_asym}
  |S_0(\ell)|=\Theta ((2+\sqrt{2})^{\ell}).
\end{align}
For different $r$, the exact values of $|S_0(\ell)|$ can be different, since the initial value $|S_0(\ell_0)|$ depends on $r$. However, the asymptotic behavior is the same for any fixed $r$.

We provide the exact expressions of $|S_0(\ell)|$ for $r=1$ and $r=3$, denoted by $|S_0^{(1)}(\ell)|$ and $|S_0^{(3)}(\ell)|$ respectively, which we will later use to derive upper and lower bound on the number of $(\Phi,q-r)^*$-bad monomials:
\begin{align}
    |S_0^{(1)}(\ell)|= & \frac{5\sqrt{2}+7}{2(3\sqrt{2}+4)} \cdot \lambda_1^\ell + \frac{5\sqrt{2}-7}{2(3\sqrt{2}-4)} \cdot \lambda_2^\ell \nonumber \\
    \approx & 0.8536 \cdot \lambda_1^\ell +  0.1464 \cdot \lambda_2^\ell \label{eq:S0r1}\\
    |S_0^{(3)}(\ell)|= & \frac{65\sqrt{2}+92}{4(12\sqrt{2}+17)}\cdot \lambda_1^\ell + \frac{65\sqrt{2}-92}{4(12\sqrt{2}-17)} \cdot \lambda_2^\ell - \lambda_3^\ell \nonumber \\
    \approx & 1.3536 \cdot \lambda_1^\ell +  0.6465\cdot \lambda_2^\ell -1 \label{eq:S0r3}
\end{align}
where $\lambda_1=2+\sqrt{2},\lambda_2=2-\sqrt{2},\lambda_3=1$ are \new{the} three distinct eigenvalues of the matrix $\bA$.

\subsubsection{Counting $(\Phi,q-r)^*$-bad monomials: }
For $q=2^\ell$ and $1\leq r<q$, we define the following set 
\begin{align*}
  S^*(\ell):=\bigg\{(a,b)\in \mathbb{Z}_q^2\ :\ \begin{split}
        \exists\ i\leq_2 b, j\leq_2 b-i ,
        \textrm{ s.t.}&\ 2i+j+a=q-r'+t(q-1),\\& \textrm{ for some }r'\in[r], t\geq 0
    \end{split}\ \bigg\}\ .
\end{align*}
It is clear that $(a,b)\in S^*(\ell)$ if and only if $x^ay^b$ is $(\Phi,q-r)^*$-bad.

We first relate the value $|S^*(\ell)|$ to $|S_0(\ell)|$ in Lemma~\ref{lem:upper_bad} and Lemma~\ref{lem:lower_bad}.
\begin{lemma}\label{lem:upper_bad}
    Let $\ell\geq 2, q=2^\ell$, $1\leq r\leq \frac{q}{4}$, $s=\ceil{\log_2 (r)}$ and $q'=2^{\ell-s}$. Denote by $S^{(3)}_0(\ell-s)$  the set of $(a,b)$ such that $x^ay^b$ is $(\Phi,q'-3)$-bad. Then
    \begin{align}
      |S^*(\ell)|< 4r^2\cdot |S_0^{(3)}(\ell-s)|. \nonumber
    \end{align}
    If $r$ is a power of $2$, then
    \begin{align}
    |S^*(\ell)|\leq r^2\cdot |S_0^{(3)}(\ell-s)|. \nonumber
  \end{align}
\end{lemma}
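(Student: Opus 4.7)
The plan is to construct the high-bit projection $\pi:(a,b)\mapsto(a_H,b_H)$, where I split $a=a_H\cdot 2^s+a_L$ with $a_L\in[0,2^s-1]$ and $b=b_H\cdot 2^s+b_L$ analogously, show $\pi(S^*(\ell))\subseteq S_0^{(3)}(\ell-s)$, and conclude by counting preimages. Each fiber of $\pi$ has at most $(2^s)^2=4^s$ elements, parametrised by the free low-bit pair $(a_L,b_L)\in[0,2^s-1]^2$; since $4^s=r^2$ exactly when $r$ is a power of two and $4^s<4r^2$ otherwise, both inequalities will follow at once from the single inclusion statement.

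For the inclusion, I would fix $(a,b)\in S^*(\ell)$ with witnesses $i\leq_2 b$, $j\leq_2 b-i$, $r'\in[r]$ and $t\geq 0$ obeying $2i+j+a=q-r'+t(q-1)$, then split $i=i_H\cdot 2^s+i_L$ and $j=j_H\cdot 2^s+j_L$ in the same fashion. Because $i\leq_2 b$ forces a borrow-free binary subtraction, the 2-shadow relations cleanly separate into $i_H\leq_2 b_H$, $i_L\leq_2 b_L$, $j_H\leq_2 b_H-i_H$, $j_L\leq_2 b_L-i_L$. Writing $2i_L+j_L+a_L=\varepsilon\cdot 2^s+u$ with $u\in[0,2^s-1]$, and noting that $i_L+j_L\leq b_L\leq 2^s-1$ forces $\varepsilon\in\{0,1,2\}$, the defining equation rearranges into
\begin{align*}
2i_H+j_H+a_H \;=\; q'-r''+tq',\qquad r'':=\ceil{(r'+t)/2^s}+\varepsilon.
\end{align*}
Together with the split 2-shadow relations, this exhibits $(a_H,b_H)\in S_t^{(r'')}(\ell-s)$; after invoking \cref{lem:down-by-one} to collapse $S_t^{(3)}$ into $S_0^{(3)}$, the task reduces to proving $r''\leq 3$.

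Bounding $r''$ is the main obstacle. For $s\geq 1$ one has $r'+t\leq r+2\leq 2^s+2$, so $\ceil{(r'+t)/2^s}\leq 2$ and hence $r''\leq 4$, while the boundary case $s=0$ forces $\varepsilon=0$ and $r''=r'+t\leq 3$ directly. To exclude the configuration $r''=4$ in the remaining range, I would observe that it would require simultaneously $\varepsilon=2$ and $\ceil{(r'+t)/2^s}=2$, which together pin down $u=2^{s+1}-(r'+t)$ and hence $2i_L+j_L+a_L=2^{s+2}-(r'+t)$; combining this with the sharp a priori bound $2i_L+j_L+a_L\leq 3(2^s-1)$ coming from the low-bit shadow conditions would force $r'+t\geq 2^s+3$, contradicting $r'+t\leq 2^s+2$. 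Multiplying $|S_0^{(3)}(\ell-s)|$ by the fiber bound $4^s$ and separating the cases on whether $r$ is a power of two then yields the two claimed inequalities.
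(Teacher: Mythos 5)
Your proposal is correct and follows essentially the same route as the paper's proof: drop the $s=\lceil\log_2 r\rceil$ least significant bits of $a,b,i,j$, show the truncated pair is $(\Phi,q'-3)$-bad (hence in $S_0^{(3)}(\ell-s)$, via Lemma~\ref{lem:down-by-one}), and bound the number of preimages of each such pair by $4^s$, which gives $r^2$ when $r$ is a power of two and is below $4r^2$ otherwise. The only difference is cosmetic bookkeeping: where the paper bounds the combined correction $\frac{(r'+t)+(2r_i+r_j+r_a)}{2^s}<4$ in one step and invokes integrality, you bound $\lceil (r'+t)/2^s\rceil\leq 2$ and $\varepsilon\leq 2$ separately and then exclude $r''=4$ using the same two inequalities, which is an equivalent computation.
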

\versionShortLong{}{
\begin{proof}
By definition, we require $\ell-s\geq 2$ to have a valid $S_0^{(3)}(\ell-s)$. Therefore, we require $\ell\geq 2$ and $r\leq \frac{q}{4}$.
Let $x^a y^b$ be an arbitrary $(\Phi,q-r)^*$-bad monomial.
By definition, this means that there exist $i\le_2 b$ and $j\le_2 b-i$ such that $2i+j+a = q-r' + (q-1)t$ for some $1\leq r'\leq r$ and $0\leq t \leq 2$.\footnote{Note that $2i+j+a\leq 2b+a\leq 3(q-1)<q-r+(q-1)t$ for any $t\geq 3$ and $r<q$.} We drop $s=\lceil \log (r)\rceil$ least significant bits in $a$, $b$, $i$ and $j$ to obtain $a'$, $b'$, $i'$ and $j'$,
i.e., one can write
\begin{align*}
    i&=i'\cdot 2^s+r_{i},\\
    2i&=2i'\cdot 2^s + 2r_i,\\
    j&=j'\cdot 2^s + r_j,\\
    a&=a'\cdot 2^s + r_a,
\end{align*}
where the remainders $0\leq r_{i},r_j,r_a<2^s$.
Denote by $q'=q/{2^s}=2^{\ell-s}$, it is clear that
\begin{align}
    2i'+j'+a'&= \frac{2i+j+a}{2^s}-\frac{2r_{i}+r_j+r_a}{2^s}\nonumber\\
    &=\frac{q-r'+(q-1)t}{2^s}-\frac{2r_{i}+r_j+r_a}{2^s}\nonumber \\
    &=q'(t+1)-\frac{r'+t}{2^s}-\frac{2r_{i}+r_j+r_a}{2^s}\nonumber
\end{align}
Since the bits in $i,j$ cannot be both one at the same position, $2r_i+r_j\leq 2(2^s-1)$. Hence $0\leq 2r_i+r_j+r_a\leq 3(2^s-1)$. In addition, since $1\leq r'+t\leq r+2\leq 2^s+2$, we then have
$$
q'(t+1)-4< 2i'+j'+a'\leq q'(t+1) - \frac{1}{2^s}.
$$
As $2i'+j'+a'$ can only be integer, we have
\begin{align*}
    q'(t+1)-3\leq 2i'+j'+a'\leq q'(t+1) - 1.
\end{align*}
This implies that $(a',b')$ is $(\Phi,q'-3)$-bad since $i'\le_2 b'$ and $j'\le_2 b'-i'$. Therefore, adding arbitrary $s$ least significant bits to a pair $(a',b')\in  S_0^{(3)}(\ell-s)$, the obtained $(a,b)$ may be $(\Phi,q-r)^*$-bad. The number of $(\Phi,q-r)^*$-bad monomials can therefore be bounded from above by
\begin{align*}
2^s\cdot  2^s\cdot |S_0^{(3)}(\ell-s)|&= (2^{\ceil{\log_2(r)}})^2  \cdot |S_0^{(3)}(\ell-s)|\\
&< (2r)^2 \cdot |S_0^{(3)}(\ell-s)|\ .
\end{align*}

If $r$ is a power of $2$, we can set $s=\log_2r$ and obtain the tighter bound.
\qed \end{proof}}
\begin{lemma}\label{lem:lower_bad}
  Let $\ell\geq 1, q=2^\ell$, $1\leq r\leq \frac{q}{2}$, $s=\floor{\log_2 r}$ and $q'=2^{\ell-s}$. Denote by $S_0^{(1)}(\ell-s)$ the set of $(a,b)$ such that $x^ay^b$ is $(\Phi,q'-1)$-bad. Then
  \begin{align}
    |S^*(\ell)|> \frac{r^2}{4}\cdot |S_0^{(1)}(\ell-s)|\ . \nonumber
  \end{align}
  If $r$ is a power of $2$, then
  \begin{align}
    |S^*(\ell)|\geq  r^2\cdot |S_0^{(1)}(\ell-s)|\ . \nonumber
  \end{align}
\end{lemma}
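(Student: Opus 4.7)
\textbf{Proof plan for \cref{lem:lower_bad}.} The strategy is to run the argument of \cref{lem:upper_bad} in reverse: starting from a $(\Phi,q'-1)$-bad monomial $(a',b')\in S_0^{(1)}(\ell-s)$ over the smaller field of size $q'=2^{\ell-s}$, I will build an injection into $S^*(\ell)$ by appending $s$ arbitrary low-order bits to each of $a',b'$, and show that all $2^s\cdot 2^s=4^s$ such extensions produce $(\Phi,q-r)^*$-bad pairs.

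Concretely, fix $(a',b')\in S_0^{(1)}(\ell-s)$ and let $i'\leq_2 b'$, $j'\leq_2 b'-i'$ be the witnesses with $2i'+j'+a'=q'-1$. For any $r_a,r_b\in[0,2^s-1]$, define $a=a'\cdot 2^s+r_a$, $b=b'\cdot 2^s+r_b$, and pick the candidate witnesses $i=i'\cdot 2^s$, $j=j'\cdot 2^s$. The 2-shadow conditions $i\leq_2 b$ and $j\leq_2 b-i$ both hold because $i$ and $j$ have zero low $s$ bits, so only the high-bit conditions $i'\leq_2 b'$ and $j'\leq_2 b'-i'$ need to be checked, and these are inherited from the choice of $(a',b')$. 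A direct computation then gives
\begin{align*}
2i+j+a=(2i'+j'+a')\cdot 2^s+r_a=(q'-1)\cdot 2^s+r_a=q-2^s+r_a.
\end{align*}

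Since $s=\lfloor\log_2 r\rfloor$ implies $2^s\leq r$, for every $r_a\in[0,2^s-1]$ the value $q-2^s+r_a$ lies in $[q-2^s,q-1]\subseteq[q-r,q-1]$, so it can be written as $q-r'$ with $r'\in[r]$ (corresponding to the $t=0$ branch in the definition of $S^*(\ell)$). Hence $(a,b)\in S^*(\ell)$. The map $(a',b',r_a,r_b)\mapsto(a,b)$ is injective because $(a',b')$ is recovered by dropping the $s$ least significant bits of $(a,b)$, and distinct $(r_a,r_b)$ produce distinct low bits. Therefore
\begin{align*}
|S^*(\ell)|\geq 4^s\cdot |S_0^{(1)}(\ell-s)|.
\end{align*}

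If $r$ is a power of $2$ then $s=\log_2 r$ and $4^s=r^2$, giving the stated equality bound. Otherwise $2^s>r/2$, so $4^s>r^2/4$, which yields the strict inequality. The validity of $S_0^{(1)}(\ell-s)$ (namely $\ell-s\geq 1$) follows from $r\leq q/2=2^{\ell-1}$, which forces $s\leq\ell-1$. The only place where care is needed is the verification that appending zero low bits to $i,j$ keeps the 2-shadow relations intact regardless of the freely chosen $r_b$, but this is immediate from the bitwise characterization in \cref{lem:lucasThm}; no subtler combinatorial step is required.
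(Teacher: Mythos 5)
Your proposal is correct and follows essentially the same route as the paper's proof: append $s=\floor{\log_2 r}$ least significant bits to $(a',b')$ while appending zeros to the witnesses $i',j'$, compute $2i+j+a=q-2^s+r_a\in[q-r,q-1]$, and count the $(2^s)^2$ extensions, with $2^s>r/2$ (resp.\ $2^s=r$ for $r$ a power of two) giving the stated bounds. Your explicit checks of the 2-shadow conditions and of injectivity are details the paper leaves implicit, but the argument is the same.
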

\versionShortLong{}{
\begin{proof}
 It follows from Lemma~\ref{lem:down-by-one} that $x^ay^b$ is $(\Phi,q'-1)$-bad if and only if $(a,b)\in S_0^{(1)}(\ell-s)$. By definition, we require $\ell-s\geq 1$ to have a valid $S_0^{(1)}(\ell-s)$. Therefore, we require $\ell\geq 1$ and $r\leq \frac{q}{2}$. Consider a pair $(a',b')\in S_0^{(1)}(\ell-s)$. According to the definition~\eqref{eq:def_S_t}, there exist $i'\leq_2 b,j'\leq_2 b-i$ such that $2i'+j'+a'=q'-1$. Construct integers $a,b,i,j$ by appending $s$ least significant bits to the binary representation $\ba', \bb',\bi',\bj'$ of $a',b',i',j'$ respectively, i.e., $a:=(a'',\ba')_2,b:=(b'',\bb')_2,i:=(i''=0,\bi')_2,j:=(j''=0,\bj')_2$ with $a'',b''\in\{0,\ldots,2^s-1\}$. It can be seen that
\begin{align*}
    2i+j+a&=(2i'+j'+a')2^s + 2i''+j''+a''\\
    &=(q'-1)\cdot 2^s+a''\\
    &= q- 2^s + a''.
\end{align*}
We can choose $a''$  and $b''$ to be any integer of $s$-bits  such that $q-r\leq 2i+j+a\leq q-1$. Namely, given a pair $(a',b')\in S_0(\ell-s)$ with $s=\floor{\log_2 r}$, we have in total $(2^s)^2 > \left(\frac{r}{2}\right)^2$ ways of choosing $a'',b''$ such that $x^ay^b$ is $(\Phi,q-r)^*$-bad. 
If $r$ is a power of $2$, we can set $s=\log_2r$ and obtain a tighter lower bound.
\qed \end{proof}}
In the following theorem we provide the exact expressions of upper and lower bounds on $|S^*(\ell)|$), using the exact expression of $|S_0(\ell)|$ in \eqref{eq:S0r1} and \eqref{eq:S0r3}.
\begin{theorem}\label{thm:exact_bounds_bad}
    Let $\ell\geq 2, q=2^\ell, 1\leq r\leq \frac{q}{4}$ and $s = \log_2r$, the number of $(\Phi,q-r)^*$-bad monomials is
    \begin{align}
        \frac{0.8536 \cdot \lambda_1^{\ell-\floor{s}} +  0.1464 \cdot \lambda_2^{\ell-\floor{s}}}{4}< \frac{|S^*(\ell)|}{r^2}< 4(1.3536 \cdot \lambda_1^{\ell-\ceil{s}} +  0.6465\cdot \lambda_2^{\ell-\ceil{s}} -1)\ , \nonumber
    \end{align}
    where $\lambda_1 = 2+\sqrt{2}$ and $\lambda_2 = 2-\sqrt{2}$.

    If $r$ is a power of $2$, we obtain
    \begin{align}
        0.8536 \cdot \lambda_1^{\ell-s} +  0.1464 \cdot \lambda_2^{\ell-s}
        \leq \frac{|S^*(\ell)|}{r^2}\leq 1.3536 \cdot \lambda_1^{\ell-s} +  0.6465\cdot \lambda_2^{\ell-s} -1\ .
        \nonumber
    \end{align}
\end{theorem}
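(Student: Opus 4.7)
The statement is a direct bookkeeping corollary: combine the two sandwich estimates (\cref{lem:upper_bad} and \cref{lem:lower_bad}) with the two closed-form expressions already computed for $|S_0^{(1)}(\ell)|$ and $|S_0^{(3)}(\ell)|$ in \eqref{eq:S0r1}, \eqref{eq:S0r3}. The plan has essentially no conceptual content beyond substitution.

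First, I would invoke \cref{lem:upper_bad} with $s=\lceil\log_2 r\rceil$ to get $|S^*(\ell)|<4r^2\,|S_0^{(3)}(\ell-\lceil s\rceil)|$, and then plug in the explicit formula \eqref{eq:S0r3} for $|S_0^{(3)}(\ell-\lceil s\rceil)|$. Dividing by $r^2$ immediately yields the claimed upper bound
\[
\frac{|S^*(\ell)|}{r^2}<4\bigl(1.3536\,\lambda_1^{\ell-\lceil s\rceil}+0.6465\,\lambda_2^{\ell-\lceil s\rceil}-1\bigr).
\]
Symmetrically, from \cref{lem:lower_bad} with $s=\lfloor\log_2 r\rfloor$ we have $|S^*(\ell)|>\tfrac{r^2}{4}\,|S_0^{(1)}(\ell-\lfloor s\rfloor)|$; substituting \eqref{eq:S0r1} produces the matching lower bound. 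The hypothesis $1\leq r\leq q/4$ guarantees $\ell-\lceil s\rceil\geq 2$ and $\ell-\lfloor s\rfloor\geq 1$, which is precisely what \cref{lem:upper_bad} and \cref{lem:lower_bad} respectively require for their input values to be well-defined.

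For the second (power-of-two) statement, both lemmas give the tighter conclusions $|S^*(\ell)|\leq r^2\,|S_0^{(3)}(\ell-s)|$ and $|S^*(\ell)|\geq r^2\,|S_0^{(1)}(\ell-s)|$, and now $s=\log_2 r$ is an integer so $\lfloor s\rfloor=\lceil s\rceil=s$. Substituting the closed forms \eqref{eq:S0r1} and \eqref{eq:S0r3} and dividing by $r^2$ yields the displayed sandwich with no loss of the factor $4$.

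There is no real obstacle; the only thing to be careful about is consistent bookkeeping of which version of $s$ (floor vs.\ ceiling) is used in each of the two lemma applications, and verifying that the domain conditions $\ell-s\geq 2$ for \cref{lem:upper_bad} and $\ell-s\geq 1$ for \cref{lem:lower_bad} are both implied by the hypothesis $r\leq q/4$. The diagonalization of $\bA$ underlying \eqref{eq:S0r1}, \eqref{eq:S0r3}, with eigenvalues $\lambda_1=2+\sqrt 2$, $\lambda_2=2-\sqrt 2$, $\lambda_3=1$ and the stated initial data for $r=1,3$, has already been carried out in the text, so no further calculation is required at this stage.
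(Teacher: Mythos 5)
Your proposal is correct and matches the paper's own (very brief) proof, which likewise just combines \cref{lem:upper_bad} and \cref{lem:lower_bad} with the closed forms \eqref{eq:S0r1} and \eqref{eq:S0r3}; your explicit bookkeeping of the floor/ceiling of $s$ and the domain checks $\ell-\ceil{s}\geq 2$, $\ell-\floor{s}\geq 1$ is exactly the right substitution. Nothing further is needed.
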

\versionShortLong{}{
\begin{proof}
It follows directly from the estimation of $|S_0(\ell)|$ in \eqref{eq:S0r1} -- \eqref{eq:S0r3} and the bounds in \cref{lem:upper_bad} and \cref{lem:lower_bad}.
\end{proof}}

We can then derive an asymptotic behavior of the rate of QC-LRS codes in~\cref{thm:asym_rate}.
\begin{corollary}\label{thm:asym_rate}
  Let $\mu = \log_2(2+\sqrt{2})$. For $q\to\infty$ and $1\leq r\leq \frac{q}{4}$, the number of $(\Phi,q-r)^*$-bad monomials is 
  \begin{align}
  |S^*(\ell)|=\Theta(r^{2-\mu} q^{\mu})\ .\nonumber
  \end{align}
  Further, the QC-LRS code $\cC_q(\Phi,q-r)$ has rate
  \begin{align*}
    R= 1-\Theta\left((q/r)^{\mu-2}\right)=1-\Theta\left((q/r)^{-0.2284}\right).
  \end{align*}
\end{corollary}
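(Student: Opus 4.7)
The plan is to derive both claims directly from the two-sided estimate on $|S^*(\ell)|$ given in \cref{thm:exact_bounds_bad}. The key observation is that $\lambda_1 = 2+\sqrt{2} = 2^\mu$, so for any integer $s$ we have $\lambda_1^{\ell-s} = 2^{\mu \ell}\cdot 2^{-\mu s} = q^\mu \cdot 2^{-\mu s}$, whereas the secondary eigenvalue contributes only $\lambda_2^{\ell-s} = O(\lambda_1^{\ell-s})$ with $\lambda_2/\lambda_1 < 1$, hence becomes a lower-order term as $\ell \to \infty$. Thus from the power-of-two version of \cref{thm:exact_bounds_bad} one gets $|S^*(\ell)| = \Theta\bigl(r^2 \cdot \lambda_1^{\ell - \log_2 r}\bigr) = \Theta(r^{2-\mu}\, q^{\mu})$ whenever $r$ is a power of two.

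For general $r$ with $1\le r\le q/4$, I would use the looser bounds in \cref{thm:exact_bounds_bad} involving $\floor{\log_2 r}$ and $\ceil{\log_2 r}$. Since these two quantities differ by at most one, the factors $\lambda_1^{\ell - \floor{\log_2 r}}$ and $\lambda_1^{\ell - \ceil{\log_2 r}}$ differ from $\lambda_1^{\ell - \log_2 r} = q^\mu r^{-\mu}$ only by the multiplicative constant $\lambda_1$, which is absorbed into the $\Theta(\cdot)$. Combining this with the prefactor $r^2$ (respectively $r^2/4$ and $4r^2$ in the upper/lower bounds) yields $|S^*(\ell)| = \Theta(r^{2-\mu} q^\mu)$ in full generality, establishing the first claim.

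For the rate statement, I would invoke \cref{thm:dim_good}: the dimension of $\cC_q(\Phi, q-r)$ equals the number of $(\Phi, q-r)^*$-good monomials, which is exactly $q^2 - |S^*(\ell)|$ since the total number of bivariate monomials $x^a y^b$ with $a,b \in \{0,\dots,q-1\}$ is $q^2$. Hence
\begin{align*}
R \;=\; \frac{q^2 - |S^*(\ell)|}{q^2} \;=\; 1 \;-\; \Theta\!\left(\frac{r^{2-\mu} q^\mu}{q^2}\right) \;=\; 1 \;-\; \Theta\!\left((q/r)^{\mu-2}\right),
\end{align*}
and plugging in $\mu - 2 = \log_2(2+\sqrt{2}) - 2 \approx -0.2284$ gives the stated numerical exponent.

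There is no real obstacle here: every step is a mechanical consequence of the bounds already proven. The only thing to be careful about is the non-integrality of $\log_2 r$ in the general case; the main point is verifying that the multiplicative slack between $\floor{\log_2 r}$ and $\ceil{\log_2 r}$ is a bounded constant independent of $r$ and $\ell$, so that the $\Theta$-notation absorbs it cleanly, and that the contribution of $\lambda_2^{\ell-s}$ (with $|\lambda_2/\lambda_1| < 1$) is genuinely of lower order so that the asymptotic growth is governed purely by $\lambda_1^{\ell-s}$.
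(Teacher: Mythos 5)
Your proposal is correct and follows essentially the same route as the paper's own proof: read off the two-sided bounds of \cref{thm:exact_bounds_bad}, note that the growth is governed by $\lambda_1^{\ell-s}$ with $\lambda_1=2^{\mu}$ (the $\lambda_2$ term and constants being absorbed into the $\Theta$), and divide the good-monomial count $q^2-|S^*(\ell)|$ (via \cref{thm:dim_good}) by $q^2$ to get the rate. Your explicit handling of the $\floor{\log_2 r}$ versus $\ceil{\log_2 r}$ slack is just a more detailed write-up of the same argument.
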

\versionShortLong{}{
\begin{proof}
It can be seen from \cref{thm:exact_bounds_bad} that the order of $|S^*(\ell)|$ is controlled by $\lambda_1^\ell$. The asymptotic estimation is obtained by neglecting the other terms and the constant coefficients.
The rate is calculated by the number of good monomials, which is $q^2-|S^*(\ell)|$, dividing the number of all bi-variate monomials, which is $q^2$.
\qed \end{proof}}
For an illustration, we plot in~\cref{fig:dimPlot1} the dimension of the code $\cC_q(\Phi,q-r)$ with $q=2^5$, which is done by computer-search according to the necessary and sufficient condition in~\cref{lem:goodCond}, and the corresponding lower and upper bounds for $r\in[1,q/4]$ based on the bounds on $|S^*(\ell)|$ in \cref{thm:exact_bounds_bad}.
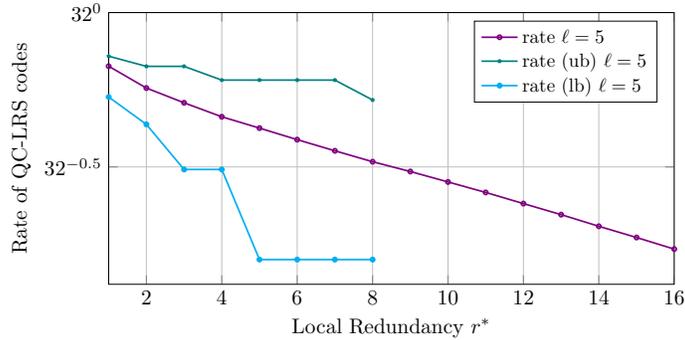
\begin{figure}[h]
  \centering
      \begin{tikzpicture}[font=\normalsize,scale = 0.8]
    \pgfplotsset{compat = 1.3}
    \begin{axis}[
    legend style={nodes={scale=0.9, transform shape}},
    cycle list name = {sims_list},
    width = 0.9\columnwidth,
    height = 0.5\columnwidth,
    xlabel = {{Local Redundancy $r^*$}},
    ylabel = {{Rate of QC-LRS codes}},
    ymode=log,
    log basis y={32},
    xmin = 1,
    xmax = 16,
    ymin = 0,
    ymax = 1,
    legend pos = north east,
    legend cell align=left,
    grid=both]

\addplot table[x=r, y=rate] {dim_l=5_CL.dat};

\addlegendentry{{rate $\ell=5$}};



\addplot table[x=r, y=rate_ub] {dimBounds_l=5_CL.dat};

\addlegendentry{{rate (ub) $\ell=5$}};

\addplot table[x=r, y=rate_lb] {dimBounds_l=5_CL.dat};

\addlegendentry{{rate (lb) $\ell=5$}};
    \end{axis}
    \end{tikzpicture}
  \caption{The dimension of QC-LRS code $\cC_q(\Phi,q-r)$ with $q=2^5$ along with the corresponding upper bound (ub) and lower bound (lb) for $r\in[1,q/4]$ calculated by $1-|S^*(\ell)|/q^2$. The lower and upper bound on $|S^*(\ell)|$ are given in \cref{thm:exact_bounds_bad}.
\vspace{-5ex}  }
  \label{fig:dimPlot1}
\end{figure}
\begin{remark}\label{rem:LRSasymRate}
  Recall that the rate of bivariate lifted Reed-Solomon (LRS) codes is $R= 1- \Theta((q/r)^{\log_2 3 -2}=1-\Theta((q/r)^{-0.4150})$~\cite{holzbaur2020lifted}. We compare the performance of our codes with LRS codes in terms of local recovery in an erasure channel in~\cref{sec:local-correction}.
\end{remark}
\subsection{Distance of Quadratic-Curve-Lifted RS Codes}
We provide the upper and lower bounds on the distance of the QC-LRS codes $\cC_{q}(\Phi,q-r)$ in the following theorem.
\begin{theorem}[Bounds on the Minimum Distance]
  Let $q$ be a power of $2$ and $\Phi$ be the set of all quadratic functions. The QC-LRS code $\cC_{q}(\Phi,q-r)$ has minimum distance
  \begin{align*}
    qr+1\leq\dist(\cC_{q}(\Phi,q-r))\leq qr+q\ .
  \end{align*}
\end{theorem}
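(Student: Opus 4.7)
The plan is to prove the two bounds separately: an explicit codeword realizes the upper bound, while the lower bound follows from a double-counting argument over quadratic curves through one fixed non-zero point of a codeword.

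For the upper bound $\dist\leq qr+q$, I would exhibit the codeword
\begin{align*}
f(x,y)\;:=\;\prod_{i=1}^{q-r-1}(x-a_i)
\end{align*}
for distinct $a_1,\ldots,a_{q-r-1}\in\Fq$. For every $\phi\in\Phi$, $f|_\phi$ is the same univariate polynomial in $x$ of degree $q-r-1<q-r$, so $f\in\cC_q(\Phi,q-r)$. Its zero set in $\Fq^2$ is $\{a_1,\ldots,a_{q-r-1}\}\times\Fq$, of size $q(q-r-1)$, giving $\wt(f)=q^2-q(q-r-1)=qr+q$.

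For the lower bound $\dist\geq qr+1$, let $f$ be a non-zero codeword and fix $p_0=(x_0,y_0)$ with $f(p_0)\neq 0$. Let $V:=\{(x_0,y):y\in\Fq\}$ denote the ``vertical line'' through $p_0$, and let $\Phi_{p_0}\subseteq\Phi$ be the set of quadratic curves through $p_0$. The three geometric facts I would establish are: (i) $|\Phi_{p_0}|=q^2$, since fixing $\phi(x_0)=y_0$ leaves $\alpha,\beta$ free and determines $\gamma$; (ii) every $\phi\in\Phi$ meets $V$ in exactly one point (because $\phi$ writes $y$ as a function of $x$), so $\phi\in\Phi_{p_0}$ meets $V$ only at $p_0$; and (iii) for each $p=(x_1,y_1)$ with $x_1\neq x_0$, exactly $q$ curves in $\Phi_{p_0}$ pass through $p$, since the two equations $\phi(x_i)=y_i$ for $i\in\{0,1\}$ leave $\alpha$ free in $\Fq$ and uniquely determine $\beta,\gamma$.

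Now double-count. Since $f(p_0)\neq 0$, every $f|_\phi$ for $\phi\in\Phi_{p_0}$ is a non-zero Reed--Solomon codeword and so has weight at least $r+1$, yielding $\sum_{\phi\in\Phi_{p_0}}\wt(f|_\phi)\geq q^2(r+1)$. On the other hand, by (i)--(iii),
\begin{align*}
\sum_{\phi\in\Phi_{p_0}}\wt(f|_\phi)=q^2+q\cdot\bigl|\{p\in\Fq^2:f(p)\neq 0,\ p\notin V\}\bigr|,
\end{align*}
where the $q^2$ counts $p_0$ (lying on every curve in $\Phi_{p_0}$), each non-zero point off $V$ contributes $q$, and non-zero points on $V\setminus\{p_0\}$ contribute nothing by (ii). Writing $w=\wt(f)$ and $w_V=\wt(f|_V)\geq 1$, the bracketed size equals $w-w_V$, so $q^2r\leq q(w-w_V)$ and hence $w\geq qr+w_V\geq qr+1$. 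The main thing to get right is the separation of the two types of non-zero points: fact (ii), which is specific to the ``vertical'' direction not represented inside $\Phi$, is what makes the contribution of $V\setminus\{p_0\}$ vanish, and this is precisely what produces the crucial $+1$ (from $w_V\geq 1$) in the final bound.
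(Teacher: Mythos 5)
Your proposal is correct. The upper bound is exactly the paper's argument (the codeword $\prod_{i}(x-a_i)$ with $q-r-1$ factors, vanishing on $q(q-r-1)$ points). For the lower bound, however, you take a genuinely different route: the paper fixes a point $\bp$ with $f(\bp)\neq 0$ and selects a subfamily of only $q$ quadratic curves through $\bp$ that pairwise intersect in no point other than $\bp$ (such a pencil exists, e.g.\ fixing $\beta$ and varying $\alpha$, because squaring is a bijection in characteristic $2$); since each such restriction is a nonzero polynomial of degree at most $q-r-1$ it has at least $r$ nonzeros besides $\bp$, and these contributions are disjoint, giving $\wt(f)\geq qr+1$ directly. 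You instead double-count incidences over \emph{all} $q^2$ curves through $p_0$, using that each nonzero point off the vertical line lies on exactly $q$ of them while nonzero points on the vertical line (other than $p_0$) lie on none; combining $\sum_\phi \wt(f|_\phi)\geq q^2(r+1)$ with $\sum_\phi \wt(f|_\phi)=q^2+q(\wt(f)-\wt(f|_V))$ yields $\wt(f)\geq qr+\wt(f|_V)\geq qr+1$. Both arguments are valid and give the same bound; the paper's is shorter once the pairwise-almost-disjoint family is exhibited, while yours avoids constructing such a family, works verbatim in any characteristic (your facts (i)--(iii) do not use $q$ even, unlike the paper's pencil), gives the marginally stronger statement $\wt(f)\geq qr+\wt(f|_V)$, and makes explicit that the loss relative to the LRS bound $(q+1)r+1$ comes precisely from the vertical direction being absent from $\Phi$ --- a point the paper only notes in a separate remark.
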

\begin{proof}
  \versionShortLong{
  The upper bound is proven by counting the number of zeros in the codeword $f(x,y)=\prod_{\alpha\in\cA} (x-\alpha)$, where $\cA$ is a subset of $\Fq$ with $|\cA|=q-r-1$. The lower bound is proven by considering the minimum number of non-zero positions in all disjoint local groups (e.g., all lines) of a non-zero symbol in a codeword. For a detailed proof we refer to the full version of this paper~\cite{liu2021quadratic}.}{
We first show the upper bound. Let $\cA\subset \Fq$ be a subset with $|\cA|=q-r-1$. Consider a function $f(x,y)=\prod_{\alpha\in\cA} (x-\alpha)$. It can be seen that $\deg(f|_\phi(x))=q-r-1$ for any $\phi\in\Phi$ therefore $f(x,y)$ is in the code $\cC_{q}(\Phi,q-r)$. The zeros of $f(x,y)$ in $\Fq^2$ are $\{(x,y):x\in\cA,y\in\Fq\}$. Therefore, the evaluations of $f(x,y)$ in $\Fq^2$ is of weight $q^2-q(q-r-1)$. Due to the linearity of the code, the upper bound on the minimum distance is proven.\\
Now we prove the lower bound. For any non-zero $f\in\cC_{q}(\Phi,q-r)$ consider a point $\bp\in\Fq^2$ such that $f(\bp)\neq 0$. Denote by $\cL_{\bp,1}\subset\Phi$ the set of lines in $\Phi$ intersecting with each other only at $\bp$. It can be seen that $|\cL_{\bp,1}|=q$. By definition, $\deg f|_\phi<q-r$ for any $L\in\cL_{\bp,1}$, therefore there are at most $q-r-1$ zeros in the evaluations of $f$ on $\phi$.
Denote by $\wt(f)$ the number of non-zero evaluations of $f$ on $\Fq^2$ and by $\wt(f|_L)$ the number of non-zero evaluations of $f$ on $\phi$, then
\begin{align*}
    \wt(f)&\geq \sum_{L\in\cL_{\bp,1}}(\wt(f|_L)\underbrace{-1}_{\textrm{excluding }f(\bp)}) \underbrace{+ 1}_{\textrm{including }f(\bp)}\\
    &\geq qr+1
\end{align*}}
\new{Note that the bounds are derived in a similar method as for LRS codes in \cite[Theorem 5.1]{guo2013new}.}
\qed \end{proof}
\versionShortLong{}{
\begin{remark}
Note that the vertical lines $x=\eta$ (constant) are not included in the set of quadratic curves $\Phi$, which is the reason why the lower bound is worse than the lower bound $d\geq (q+1)r+1$ for lifted Reed-Solomon codes~\cite{guo2013new}.
\end{remark}}

\section{Local Recovery Capability from Erasures}
\label{sec:local-correction}
\new{For a code with locality~\cite{tamo2014family}, the local groups of a codeword symbol are defined as the sets of indices where the symbol can be recovered by accessing only the symbols in one of the sets.}
Given a QC-LRS code over $\Fq$, the number of \new{local recovery sets} of any codeword symbol is the number of quadratic curves over $\Fq$ \new{passing through a certain point}, which is $q^2$. For an LRS codes, the number of \new{local recovery sets} is $q+1$.
Consider an erasure channel with erasure probability $\tau$. With respect to the local recovery, we are interested in correcting a certain erasure within a \new{local recovery set} and how large the failure probabilities of LRS/QC-LRS codes is. The failure probability is exactly the probability that there are at least $r$ other erasures in each \new{local recovery set} of the erased symbol to be recovered. For LRS codes, since all the \new{local recovery sets} are disjoint, the failure probability is exactly $\left(\sum_{i=r}^{q-1} \binom{q-1}{i} \tau^i(1-\tau)^{q-1-i} \right)^{q+1}$.
For QC-LRS codes, since the \new{local recovery sets} may intersect with each other, an analysis for the closed form of the failure probability is still an open problem.
In order to compare the performance of these two codes, we run simulations with both codes of length $n=64$, dimension $k=10$ and $k=6$, respectively. The simulation results are presented in \cref{fig:dim10}. We can see that for both $\dim=10$ and $\dim=6$, the failure probability of local recovery with QC-LRS is smaller than or similar to that with LRS codes for $\tau\leq 0.7$.
Therefore, for this length, QC-LRS codes perform better than LRS for local recovery.
\begin{figure}[h]
  \centering
        \begin{tikzpicture}
    \pgfplotsset{compat = 1.3}
    \begin{semilogyaxis}[
    legend style={nodes={scale=0.8, transform shape}},
    cycle list name = {sims_list},
    width = 0.9\columnwidth,
    height = 0.6\columnwidth,
    xlabel = {{Erasure Probability $\tau$}},
    ylabel = {{Failure Probability of Local Recovery}},
    xmin = 0.3,
    xmax = 1,
    ymin = 0,
    ymax = 1.1,
    legend pos = south east,
    legend cell align=left,
    grid=both]
\addplot table[x=tau,y=fail_rate] {dim10_fail_rate_l=3_r=4_step=50_LRS.dat};

\addlegendentry{{LRS $(q=8, \dim=10, r=4)$}};

\addplot table[x=tau,y=fail_rate] {dim10_fail_rate_l=3_r=3_step=50_CL.dat};

\addlegendentry{{QC-LRS $(q=8, \dim=10, r=3)$}};

\addplot table[x=tau,y=fail_rate] {dim6_fail_rate_l=3_r=5_step=50_LRS.dat};

\addlegendentry{{LRS $(q=8, \dim=6, r=5)$}};

\addplot table[x=tau,y=fail_rate] {dim6_fail_rate_l=3_r=4_step=50_CL.dat};

\addlegendentry{{QC-LRS $(q=8, \dim=6, r=4)$}};




    \end{semilogyaxis}
    \end{tikzpicture}
  \caption{Local recovery performance of LRS/RM and QC-LRS \new{($\cC_{q}(\Phi,d=q-r)$)} codes of length $n=q^2=64$ dimension $k=10$ (rate $=k/n=0.15625$) and dimension $k=6$ (rate $=k/n=0.09375$). Note that the LRS codes with the parameters here are RM codes. \vspace{-2ex}}
  \label{fig:dim10}
\end{figure}
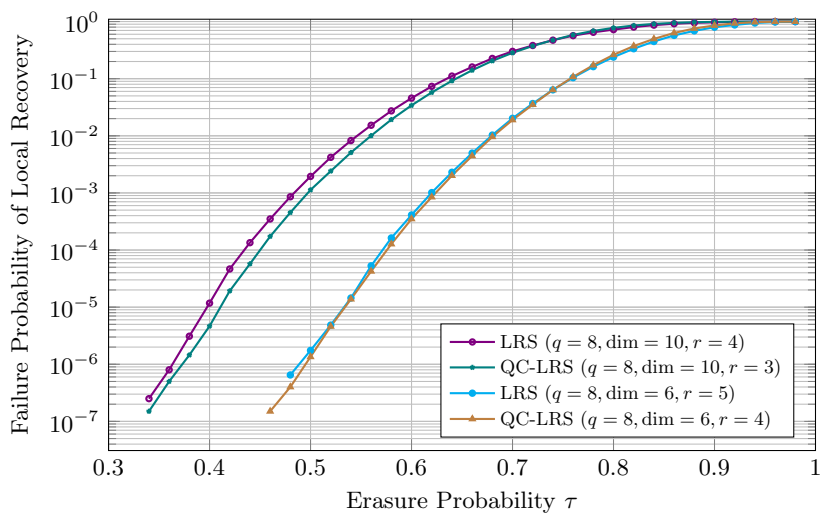
\vspace{-1em}

\bibliographystyle{splncs04}
\bibliography{refs}


\end{document}